\date{3 (16) June 2012}
\author{Theodore Th. Voronov}
\address{School of Mathematics, University of Manchester,
Oxford Road, Manchester, M13 9PL, United Kingdom}
\email{theodore.voronov@manchester.ac.uk}
\title{$Q$-Manifolds and Mackenzie Theory}
\newtheorem{theorem}{Theorem}
\newtheorem{proposition}{Proposition}
\theoremstyle{definition}
\newtheorem{definition}{Definition}
\newtheorem{example}{Example}[section]
\newtheorem{remark}{Remark}[section]
\newtheorem*{genprinc}{General principle}
\def\co{\colon\thinspace}
\renewcommand{\leq}{\leqslant}
 \DeclareMathOperator{\Vect}{Vect}
\DeclareMathOperator{\id}{id}
\newcommand{\der}[2]{{\frac{\partial {#1}}{\partial {#2}}}}
\newcommand{\Z}{{\mathbb Z_{2}}}
\newcommand{\ZZ}{{\mathbb Z}}
\newcommand{\p}{\partial}
\newcommand{\fun}{C^{\infty}}
\newcommand{\w}{{\boldsymbol{w}}}
\renewcommand{\a}{\alpha}
\renewcommand{\b}{\beta}
\def\e{\varepsilon}
\newcommand{\g}{{\gamma}}
\newcommand{\h}{\eta}
\renewcommand{\t}{\theta}
\newcommand{\lam}{{\lambda}}
\newcommand{\x}{{\xi}}
\def\d{\delta}
\def\t{\theta}
\newcommand{\jtt}{{\tilde \jmath}}
\newcommand{\itt}{{\tilde \imath}}
\newcommand{\ut}{{\tilde u}}
\newcommand{\vt}{{\tilde v}}
\newcommand{\att}{{\tilde \a}}
\newcommand{\btt}{{\tilde \b}}
\newcommand{\be}{{\bar e}}
\newcommand{\bbe}{\Bar {\bar e}}
\begin{document}
\begin{abstract}
Double Lie algebroids were discovered by Kirill Mackenzie from the
study of double Lie groupoids and were defined in terms of rather
complicated conditions making use of duality theory for Lie
algebroids and double vector bundles. In this paper we establish a
simple alternative characterization of double Lie algebroids in a
supermanifold language. Namely, we show that a double Lie algebroid
in Mackenzie's sense is equivalent to a double vector bundle endowed
with a pair of commuting homological vector fields of appropriate
weights. Our approach  helps to simplify and elucidate Mackenzie's
original definition; we show how it fits into a bigger picture of
equivalent structures on `neighbor' double vector bundles. It also
opens ways for extending the theory to   multiple Lie
algebroids, which we introduce here.
\end{abstract}

\maketitle 

\section*{Introduction}

Double Lie algebroids arose in the works on double Lie
groupoids~\cite{mackenzie:secondorder1},
 \cite{mackenzie:secondorder2} and in connection with an analog for
Lie bialgebroids of the classical Drinfeld double of Lie
bialgebras~\cite{mackenzie:doublealg}, 
\cite{mackenzie:drinfeld}.

K.~Mackenzie   put forward the idea that (the analog of) the
\emph{Drinfeld double} for a Lie bialgebroid should be a
\emph{double Lie algebroid} in the Ehresmann sense understood
properly.

Recall that a `double object' in Ehresmann's sense in a  category
$\mathcal{C}$ is an object of the same type $\mathcal{C}$ in the
category $\mathcal{C}$. For example, a group object in the category
of groups (which is, as one can see,  just an Abelian group). This
can  be defined for certain categories where objects have structures
described by diagrams. (As  the above example shows, for some
categories this   may be defined, but be not very interesting.)
`Double Lie groupoids' are therefore groupoid objects in the
category of Lie groupoids. For groupoids, differently from groups,
this leads to a richer structure, not to  a degeneration.

Unlike groupoids, Lie algebroids are not defined diagrammatically.
Because of that, defining double objects for them could not  be done
directly. `Double Lie algebroids' first appeared as the tangent
objects for double Lie groupoids~\cite{mackenzie:secondorder1},
 \cite{mackenzie:secondorder2}. Properties of these tangent
objects were axiomatized later to give the abstract notion. The
definition so obtained~\cite{mackenzie:doublealg} is quite
complicated.  Even stating  the conditions appearing there requires
non-trivial applications of the  duality theory for double vector
bundles and of the relations  between Lie algebroids and Poisson
structures.

Mackenzie justified his definition of double Lie algebroids by
proving the thesis quoted above, i.e., by showing that this
definition is satisfied by a Lie bialgebroid generalization of
Drinfeld's classical double~\cite{mackenzie:doublealg},
\cite{mackenzie:drinfeld}. Hence there was absolutely no doubt that
in~\cite{mackenzie:doublealg} a correct notion was discovered.
However the complexity of the original definition has somewhat
hindered its further applications.

The aim of this paper is to give a simple alternative
characterization of Mackenzie's double Lie algebroids. This is
achieved in the language of $Q$-manifolds, i.e., supermanifolds
endowed with an odd vector field of square zero.

An obvious part of a definition of a double Lie algebroid is, of
course, two structures of ordinary Lie algebroids\,\footnote{In
fact, four Lie algebroid structures, on the four sides of a double
vector bundle; but they can be reduced to the two `main' ones as we
explain in Section~\ref{sec.analysis}.}. The problem was to find
compatibility conditions that have to be satisfied.

In this paper \emph{we analyze Mackenzie's original compatibility
conditions and prove that they are equivalent to the commutativity
of two homological vector fields of suitable weights on a
supermanifold naturally associated with a given double vector
bundle}. This radically simplifies the theory and allows to extend
it immediately to the multiple case, i.e., allows to introduce
$n$-fold Lie algebroids, which before was inaccessible.

The original approach to double Lie algebroids is not made redundant
by our work. In the course of the proof of our main statement, we
show that Mackenzie's `Condition \textrm{III}' (see
Section~\ref{sec.doublealg} below), which pertains to a certain Lie
bialgebroid, actually subsumes his other conditions. With this
simplification, we show further that Mackenzie's original definition
and the construction given in this paper become parts of what we
call a \textbf{big picture}. It is as follows.

Recall that, say, a Lie algebra  $\mathfrak g$ has equivalent
manifestations as a linear Poisson bracket on the coalgebra
$\mathfrak g^*$, as a linear Schouten bracket on the anticoalgebra
$\Pi\mathfrak  g^*$ and as a quadratic homological vector field on
the antialgebra $\Pi\mathfrak g$. See, for
example,~\cite{tv:graded}. The vector spaces $\mathfrak g^*$,
$\Pi\mathfrak g$ and $\Pi\mathfrak  g^*$ are the \textit{neighbors}
of the vector space $\mathfrak g$.

To apply this idea to double Lie algebroids, consider a given double
vector bundle and take all its {neighbors}, that is, the double
vector bundles obtained by dualizations and parity reversions. There
are twelve of them, including the original bundle. Assume the
existence of two ordinary Lie algebroid structures on the original
double vector bundle, with some natural linearity conditions. Then
each of the neighbors acquires a pair of structures such as, e.g., a
Lie algebroid structure and a vector field, a bracket and a vector
field, etc.  We can say that the structure of a `double Lie
algebroid' is manifested in various ways on all of these neighbor
double vector bundles. The crucial thing is to formulate the
compatibility constraint.  To this end, we notice that there are
exactly five cases where the induced structures are defined on the
total space as opposed to the space of sections of a vector bundle.
In each such case  a  compatibility condition comes about naturally
(e.g., that the vector field should be a derivation of the Poisson
bracket). All these natural conditions can be shown to be
equivalent\,---\,by a certain functorial argument. It turns out that
in exactly four of these cases the `natural' compatibility condition
is a reformulation of Mackenzie's Condition \textrm{III}, and in the
remaining case it is precisely our commutativity condition.

The importance of  double Lie groupoids and Lie bialgebroids, and
notions related to them, such as double Lie algebroids, follows,
in particular, from their natural links with Poisson geometry. The
infinitesimal object for a Poisson groupoid is a Lie bialgebroid as
it was shown by Mackenzie and Xu~\cite{mackenzie:bialg}. On the
other hand, if one takes the cotangent bundle $T^*G$ of a
Poisson--Lie group (or a Poisson groupoid) $G$, it is an
`\textit{LA}-groupoid', a notion intermediate between double Lie
groupoids and double Lie algebroids~\cite{mackenzie:secondorder1}.
(Recall that Poisson groupoids incorporate  both Poisson--Lie groups
and symplectic groupoids~\cite{weinstein:symp-groupoids87},
\cite{weinstein:coisotropic88}.) The richness of these notions can
be also seen  in numerous non-obvious structures, isomorphisms and
dualities arising for them, e.g., a  duality theory for double and
triple vector bundles, producing unexpected discrete symmetry
groups~\cite{mackenzie:duality}. We believe that methods developed
in this paper will be particularly useful for all these
applications.

We wish to stress that in our work, supermanifolds provide powerful
tools that we apply to ordinary (``purely even'') objects. Although
we show that everything works also in a `superized' context, this
was not the main goal.

The paper is organized as follows.

In Section~\ref{sec.doublealg} we recall the definition of double
Lie algebroids.

In Section~\ref{sec.background} we recall the description of
(ordinary) Lie algebroids in the language of homological vector
fields, and revise double vector bundles. In particular, we
introduce partial reversions of parity.

In Section~\ref{sec.main} we define double Lie antialgebroids and
give our main statement (Theorem~\ref{thm.main}).

In Section~\ref{sec.analysis} we analyze the three conditions
appearing in the definition of double Lie algebroids and give  a
proof of Theorem~\ref{thm.main}.

In Section~\ref{sec.big} we show how the equivalence of Mackenzie's
notion of double Lie algebroids and our notion of double Lie
antialgebroids is a part of a bigger picture. \textit{Modulo} some
facts established in Section~\ref{sec.analysis}, this provides an
alternative proof of Theorem~\ref{thm.main}.

In Section~\ref{sec.appl} we show the equivalence of Mackenzie's and
Roytenberg's doubles of Lie bialgebroids and discuss an extension of
the whole theory to the multiple   case.

\subsection*{Terminology and notation}

We use the standard language of supermanifolds. The letter $\Pi$
denotes the parity reversion functor, and notation such as
$\Phi^{\Pi}$ is used for linear maps induced on the
\textit{opposite} (parity reversed) objects. Commutators and similar
notions are always understood in the $\Z$-graded sense. A tilde over
an object is used to denote its parity. A \textit{$Q$-manifold}
means a supermanifold endowed with a homological vector field;
likewise, $P$- and \textit{$S$-manifolds} mean those with a Poisson
or Schouten (= odd Poisson) bracket. A \textit{$QS$-manifold}
(resp., a \textit{$QP$-manifold}) means one with $Q$- and $S$-
structures (resp., with $Q$- and $P$-structures) that are compatible
(the vector field is a derivation of the bracket,
cf.~\cite{yvette:exact}). In general, notation and terminology are
close to our paper~\cite{tv:graded}. The space of smooth sections of
a vector bundle $E\to M$ is denoted by $\fun(M,E)$ and the space of
vector fields on a manifold $M$, by $\Vect(M)$. We wish to draw the
reader's special attention to our normally dropping the prefix
\text{`super-'} when this cannot cause confusion and speaking, as a
rule, of `manifolds' meaning supermanifolds, `Lie algebras' meaning
superalgebras, etc.

\section{Double Lie algebroids}\label{sec.doublealg}

Double Lie algebroids were introduced by Mackenzie
in~\cite{mackenzie:doublealg}, see also~\cite{mackenzie:drinfeld},
as the infinitesimal counterparts of double Lie groupoids. The
latter notion is a double object in the sense of Ehresmann, i.e., a
groupoid object in the category of Lie groupoids. Therefore, it has
a natural categorical formulation. Compared to it, the abstract
notion of a double Lie algebroid is rather complicated and
non-obvious. The reason  for this, is that properties of brackets
for Lie algebroids are not expressed by diagrams, so one cannot
approach double objects for them by methods of category theory.
Mackenzie's conditions given below  came about as an abstraction of
the properties of the double Lie algebroid of a double Lie groupoid
discovered in~\cite{mackenzie:secondorder1},
\cite{mackenzie:secondorder2}.

\begin{definition}
\label{def.dlie} A double vector bundle
\begin{equation} \label{eq.dvb}
    \begin{CD} D@>>> B\\
                @VVV  @VVV \\
                A@>>>M
    \end{CD}
\end{equation}
is a \textbf{double Lie algebroid} if all sides are (ordinary) Lie
algebroids and the following \textbf{conditions \em{I}}, \textbf{\em{II}}, and
\textbf{\em{III}} are satisfied:
\begin{description}
  \item[\textbf{\em{I}.}] With  respect to the vertical structures of
  Lie algebroids, $\begin{array}{c} D\\
                \downarrow  \\
                A
    \end{array}$ and
    $\begin{array}{c} B\\
                \downarrow  \\
                M
    \end{array}$, all maps related with the horizontal vector bundle
    structures are Lie algebroid morphisms (more precisely, this includes the projections, the zero sections,
    the fiberwise addition and multiplication by scalars). The same
    holds with vertical/horizontal structures interchanged.
  \item[\textbf{\em{II}.}] The horizontal arrows in the diagram
  \begin{equation*}
    \begin{CD} D@>a>> TB\\
                @VVV  @VVV \\
                A@>a>>TM
    \end{CD}
\end{equation*}
where at the right there is the tangent prolongation of the Lie
algebroid $B\to M$, and the horizontal arrows are the anchors,
define a Lie algebroid morphism. The same holds with
vertical/horizontal structures interchanged.
  \item[\textbf{\em{III}.}] The vertical arrows in the diagram
  \begin{equation*}
    \begin{CD} D^{*A}@>>> K^*\\
                @VVV  @VVV \\
                A@>>>M
    \end{CD}
\end{equation*}
define a Lie algebroid morphism. Here $K$ denotes the core. The same
holds with vertical and horizontal, and $A$ and $B$, interchanged.
The vector bundles in duality $D^{*A}\to K^*$ and $D^{*B}\to K^*$
define a Lie bialgebroid.
\end{description}
\end{definition}

A detailed analysis of these conditions will be given in the next
sections. Now we wish to recall only the following. A \textit{double
vector bundle} such as~\eqref{eq.dvb} is defined by the condition
that all vector bundle structure maps in one direction (horizontal
or vertical) are vector bundle morphisms for the other direction. Its
\textit{core} $K$  is defined as the intersection of the kernels of
the projections $D\to A$ and $D\to B$ considered as vector bundle
morphisms (w.r.t.  the other structure). $K$ is a vector bundle over
$M$. It is a theorem due to Mackenzie~\cite{mackenzie:sympldouble} that taking the two duals of
$D$ considered as a vector bundle either over $A$ or over $B$ leads
to two double vector bundles
\begin{equation*}
    \begin{CD} D^{*A}@>>> K^*\\
                @VVV  @VVV \\
                A@>>>M
    \end{CD} \text{\quad and \quad }
    \begin{CD} D^{*B}@>>> B\\
                @VVV  @VVV \\
                K^*@>>>M
    \end{CD}
\end{equation*}
such that the vector bundles $D^{*A}\to K^*$ and $D^{*B}\to K^*$
over the dual $K^*$ of the core $K$ are\,---\,unexpectedly\,---\,in
a natural duality. All these facts, as well as the notion of the
\textit{tangent prolongation} of a Lie algebroid, can be found
in~\cite[Ch.~9]{mackenzie:book2005}, see also
\cite{mackenzie:duality,mackenzie:double06,mackenzie:double11}. \textit{Lie
bialgebroids} were introduced by Mackenzie and
Xu~\cite{mackenzie:bialg}. Their theory was  advanced by
Y.~Kosmann-Schwarzbach~\cite{yvette:exact}, who in particular gave a
very handy form of the definition, which we use.
See~\cite[Ch.~12]{mackenzie:book2005}.

\section{Lie algebroids and double vector bundles: some background}
\label{sec.background}

In this section we  develop tools that will be later used for an
alternative description of double Lie algebroids (our main goal).

Henceforth we work in the `super' setup, i.e., we consider
supermanifolds and bundles of supermanifolds. However, we
systematically skip the prefix `super-' except when we wish to make
an emphasis.   All the constructions from the previous section carry
over to the super case.

We  use \textit{graded manifolds} as defined in~\cite{tv:graded},
i.e., supermanifolds endowed with an extra $\ZZ$-grading in the
algebras of functions, in general not related with parity. We refer
to such grading as \textit{weight}.

\subsection{Lie algebroids and Lie antialgebroids}\label{sec.algebroids}

Let us recall some known facts concerning  {Lie algebroids}.

It was first  shown by Va\u{\i}ntrob~\cite{vaintrob:algebroids} that Lie
algebroids can be described by homological vector fields. We shall
recall this correspondence using the description given
in~\cite{tv:graded} in the language of derived brackets. As
mentioned, we consider the `superized' version (i.e., `super' Lie
algebroids) by default.

Let $F\to M$ be a vector bundle. The total space $F$ is a graded manifold in a natural way,  the (pullbacks of) functions on the base $M$
having weight $0$ and linear functions on the fibers, weight $1$.
Using weights is very helpful for describing various geometric
objects. For example, vector fields  of weight $-1$ on $F$
correspond to sections of $F$ (or $\Pi F$, see below). Vector fields
of weight $0$ are generators of fiberwise linear transformations.
Vector fields of weight $1$ can be used to generate brackets of
sections. More precisely:  a \textbf{Lie antialgebroid} structure on
$F\to M$, by definition,  is given by a homological field $Q\in
\Vect (F)$ of weight $1$.

\textit{There is a one-to-one correspondence between Lie
antialgebroids and Lie algebroids}, as follows.

Let $\Pi$ denote the parity reversion functor and $F=\Pi E$ for a
vector bundle $E\to M$. Then $F$ is a Lie antialgebroid if and only
if $E$ is a Lie algebroid. The anchor and the bracket  for the
sections of $E$ are given by the following formulas:
\begin{equation}\label{eq.auf}
    a(u)f:=\bigl[[Q,i(u)],f\bigr]
\end{equation}
and
\begin{equation} \label{eq.uv}
    i([u,v]):=(-1)^{\ut}\bigl[[Q,i(u)],i(v)\bigr].
\end{equation}
Here $f\in \fun(M)$, and $u,v\in \fun(M, E)$ are sections. We use
the natural odd injection
\begin{equation}\label{eq.inj}
    i\co \fun(M, E)\to \Vect (\Pi E),
\end{equation}
which sends a section $u\in \fun(M, E)$ to the vector field $i(u)\in
\Vect (\Pi E)$   of weight $-1$. The map $i$ is an odd isomorphism
between the space of sections $\fun(M, E)$ and the subspace
$\Vect_{-1}(\Pi E)\subset \Vect(\Pi E)$ of all vector fields of
weight $-1$. By counting weights, one can see that the LHS's
of~\eqref{eq.auf} and~\eqref{eq.uv} are well-defined. The properties
of the bracket and anchor are deduced from the identity $Q^2=0$ as
is standard in the derived brackets method. Conversely, starting
from a Lie algebroid structure in $E\to M$, one can reconstruct $Q$
on $\Pi E$ with the desired properties.

All these facts can be checked without coordinates; however,
introducing local coordinates makes them particularly transparent.
Let $x^a$ denote local coordinates on the base $M$. We shall use
$u^i$ and $\x^i$ for linear coordinates in the fibers of $E$ and
$F=\Pi E$, respectively. Changes of coordinates have the following
form:
\begin{align*}
    x^a=x^a(x'),\quad
    u^i=u^{i'}T_{i'}{}^{i}(x'),
    \intertext{and}
\x^i=\x^{i'}T_{i'}{}^{i}(x').
\end{align*}
The map $i\co \fun(M, E)\to \Vect (\Pi E)$ has the following
appearance in coordinates{\,}\footnote{Here we denoted the
components $u^i(x)$ of a section $u=u^i(x)e_i$ in the same way as
the fiber coordinates $u^i$. To avoid confusion note that the parity
of a component  $u^i(x)$ is the same as that of the corresponding
variable $u^i$ if the section $u$ is even and it is of the opposite
parity if the section $u$ is odd.}: if $u=u^i(x)e_i$, then
\begin{equation}\label{eq.iu}
    i(u)=(-1)^{\ut}u^i(x)\der{}{\x^i}\,.
\end{equation}
Clearly, the RHS of \eqref{eq.iu} is the general form of a vector
field of weight $-1$ on $F$. A vector field $Q$ of weight $1$ on $F$
in coordinates has the form
\begin{equation*}
    Q=\x^iQ_i^a(x)\,\der{}{x^a}+\frac{1}{2}\,\x^i\x^j Q_{ji}^k
    (x)\,\der{}{\x^k}\,.
\end{equation*}
Equations~\eqref{eq.auf} and \eqref{eq.uv} produce the following
formulas for the anchor:
\begin{equation*}
    a(u)=u^i(x)\,Q_i^a(x)\,\der{}{x^a}\,,
\end{equation*}
and for the brackets:
\begin{equation*}
    [u,v]=\Bigl(u^i Q_i^a\,\p_av^k-(-1)^{\ut(\vt+1)}v^i Q_i^a\,\p_au^k-
    (-1)^{\itt(\vt+1)} u^iv^j Q_{ji}^k\Bigr)e_k\,,
\end{equation*}
where we abbreviated $\p_a=\p/\p x^a$. In particular, for the
elements of the local frame $e_i$ we have
\begin{equation*}
    [e_i,e_j]=(-1)^{\jtt} Q_{ij}^k(x)\,e_k\,.
\end{equation*}

For the record, we shall also include  explicit expressions specifying the  Poisson and Schouten brackets induced on the total spaces of the vector bundles $E^*$ and $\Pi E^*$, respectively. Using the local coordinates $u_i$ and $\x_i$ in the respective fibers (so that the bilinear forms $u^iu_i$ and $u^i\x_i$ are invariant), we obtain
\begin{equation*}
    \{x^a,x^b\}=0\,, \quad \{u_i,x^a\}=Q_i^a(x)\,, \quad \{u_i,u_j\}=(-1)^{\jtt} Q_{ij}^k(x)\,u_k\,
\end{equation*}
for  the Poisson brackets of the coordinates, and
\begin{equation*}
    \{x^a,x^b\}=0\,, \quad \{\x_i,x^a\}=Q_i^a(x)\,, \quad \{\x_i,\x_j\}=(-1)^{\jtt} Q_{ij}^k(x)\,\x_k\,
\end{equation*}
for the Schouten brackets. They appear exactly  the same, but one should not forget  that the Poisson bracket is even and the Schouten bracket is odd. For the sign conventions and other information see~\cite{tv:graded}. For a coordinate-free description of these Lie-Poisson and Lie-Schouten brackets  (without `super') see~\cite[Ch.~7 and Ch.~10]{mackenzie:book2005}.

\medskip
Now let us proceed to double vector bundles.

\subsection{Parity reversions for double vector bundles}\label{sec.parreversion}

For double vector bundles  see~\cite[Ch. 9]{mackenzie:book2005}, \cite{mackenzie:duality},  and the recent paper~\cite{mackenzie:duality2009}. All the necessary notions readily carry over to the supermanifold setup. In particular, this allows to consider parity reversions for double and multiple vector bundles. (Such
operations should be studied together with the duality operations of Mackenzie  theory~\cite{mackenzie:duality} and \cite{mackenzie:duality2009}.)

Let
\begin{equation} \label{eq.dvb2}
    \begin{CD} D@>>> B\\
                @VVV  @VVV \\
                A@>>>M
    \end{CD}
\end{equation}
be a double vector bundle (in the category of supermanifolds). The
manifold $D$ is naturally bi-graded, by weights corresponding to the
two vector bundle structures. We denote these weights by $\w_1$ and
$\w_2$, or by $\w_A$ and $\w_B$, as convenient.

A double vector bundle~\eqref{eq.dvb2} allows \textbf{fiberwise reversions of parity} in both
directions, horizontal and vertical. We denote the corresponding
operations by $\Pi_1$ and $\Pi_2$ (or by $\Pi_A$ and $\Pi_B$ when convenient).  The
vertical reversion of parity $\Pi_1=\Pi_A$ gives
\begin{equation}\label{eq.pa}
    \begin{CD} \Pi_A D@>>> \Pi B\\
                @VVV  @VVV \\
                A@>>>M
    \end{CD}
\end{equation}
which is a new double vector bundle. One can apply the horizontal
reversion of parity to it (applying the vertical reversion again
takes us back), or do it  the other way round. The parity reversions $\Pi_1$ and $\Pi_2$ and their compositions are covariant functors on the category of double vector bundles. (Recall that the morphisms in this category are the morphisms of diagrams~\eqref{eq.dvb2} inducing fiberwise linear maps of all constituent ordinary vector bundles. In the language of graded manifolds, this means simply that a map of   total spaces should preserve both weights  $\w_1$ and $\w_2$.)

\begin{proposition} \label{prop.pp} The operations $\Pi_1$ and $\Pi_2$ commute:
\begin{equation*}
    \Pi_1\Pi_2=\Pi_2\Pi_1\,,
\end{equation*}
or, more precisely, there is a canonical isomorphism of functors
\begin{equation*}
    I_{12}\co  \Pi_2\Pi_1\to \Pi_1\Pi_2\,.
\end{equation*}
In  greater detail,  for each
double vector bundle given by~\eqref{eq.dvb2},
there is an isomorphism of double vector bundles
\begin{equation} \label{eq.pi1pi2}
    \begin{CD} \Pi_B\Pi_A D@>>> \Pi B\\
                @VVV  @VVV \\
                \Pi A@>>>M
    \end{CD}
    \text{\quad $\stackrel{\quad I_{12}\quad}{\longrightarrow}$ \quad}
    \begin{CD} \Pi_A\Pi_B D@>>> \Pi B\\
                @VVV  @VVV \\
                \Pi A@>>>M
    \end{CD}
\end{equation}
(where we used the more suggestive notation $\Pi_A=\Pi_1$ and
$\Pi_B=\Pi_2$) commuting with the morphisms  induced by all  double vector bundles morphisms    $\Phi\co D_1\to D_2\,$.
\end{proposition}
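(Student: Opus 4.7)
My plan is to build the natural transformation $I_{12}$ in adapted local coordinates, verify that it glues to a globally defined isomorphism of double vector bundles, and then check naturality with respect to morphisms.

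First, I would introduce adapted local coordinates $(x^a, u^i, v^\a, w^\mu)$ on a chart of $D$, of bi-weights $(0,0)$, $(1,0)$, $(0,1)$, $(1,1)$ respectively, with $w^\mu$ parametrizing the core direction. It is a standard fact (cf.~\cite[Ch.~9]{mackenzie:book2005}) that such coordinates exist locally and that transition functions take the form
\begin{equation*}
  u^{i'} = u^j T_j{}^{i'}(x),\quad v^{\a'} = v^\b R_\b{}^{\a'}(x),\quad w^{\mu'} = w^\nu S_\nu{}^{\mu'}(x) + u^j v^\b C_{j\b}{}^{\mu'}(x),
\end{equation*}
and this carries over verbatim to the super setting with the usual sign conventions.

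Second, I would write out $\Pi_B\Pi_A D$ and $\Pi_A\Pi_B D$ in these coordinates. The operation $\Pi_A=\Pi_1$ reverses the parity of those coordinates of weight $\w_A=1$ (namely $u^i$ and $w^\mu$), and $\Pi_B=\Pi_2$ reverses the parity of those of weight $\w_B=1$ (namely $v^\a$ and $w^\mu$). In either order one ends up with the same bi-graded coordinate system, in which $u^i$ and $v^\a$ have their parities reversed once while $w^\mu$ is reversed twice and hence carries its original parity. This singles out an obvious candidate for $I_{12}$: the identity on $x$, $u$, $v$ and a fixed parity-determined sign on the core coordinates $w^\mu$.

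The main technical step, and the chief obstacle, is to pin down this sign correctly on $w^\mu$, and this is where all the subtlety lies. Swapping the order in which $\Pi_A$ and $\Pi_B$ act on the quadratic term $u^j v^\b C_{j\b}{}^{\mu'}$ of the transition law for $w^\mu$ produces a Koszul sign depending on the parities of $u^j$ and $v^\b$ after their reversions. The map $I_{12}$ must absorb precisely this discrepancy in its action on the core coordinates. A direct computation on one chart pins down the necessary sign; compatibility with further changes of adapted coordinates is then automatic, because the only place where a discrepancy could arise is that same $u\,v$ term. Consequently, $I_{12}$ glues to a well-defined global isomorphism of double vector bundles.

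Finally, naturality with respect to a morphism $\F\co D_1\to D_2$ of double vector bundles is formal. In adapted coordinates, $\F$ is polynomial in the fiber coordinates with homogeneous bi-degrees $(1,0)$, $(0,1)$, $(1,1)$ and with coefficients depending only on $x$. Since $I_{12}$ acts as the identity on coordinates of bi-weights $(1,0)$ and $(0,1)$ and by the fixed sign on those of bi-weight $(1,1)$, the induced maps $\F^{\Pi_B\Pi_A}$ and $\F^{\Pi_A\Pi_B}$ differ by exactly the Koszul sign that $I_{12}$ was designed to absorb; hence the naturality square closes. Once the bookkeeping of signs in the previous step has been carried out, this last step is mechanical.
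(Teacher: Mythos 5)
Your proposal follows the paper's own proof essentially step for step: adapted coordinates, comparison of the two transition laws for the core coordinates under the two orders of reversion, definition of $I_{12}$ as the identity on the side coordinates together with a sign on the core, and a coordinate check of naturality. The only detail left implicit is the outcome of the deferred sign computation: the two transition laws differ by a uniform factor of $-1$ (not a parity-dependent one), so $I_{12}$ acts as $-\id$ on the core bundle, exactly as in the paper.
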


We can denote the common value of the ultimate total spaces
by $\Pi^2 D$ (up to a natural isomorphism). If we need a particular choice of the double vector bundle, we can agree {for concreteness} that $\Pi^2 D=\Pi_2\Pi_1 D$\,.
We  call $\Pi^2 D$ the \textit{complete parity reversion} of $D$.

\begin{remark} Everything extends immediately to the  $n$-fold
case, with $\Pi^n D$ being the \textit{complete parity reversion} of
the ultimate total space $D$ of an $n$-fold vector bundle. There are
partial parity reversion operations $\Pi_{r}$ such that
$\Pi_{r}\Pi_{s}=\Pi_{s}\Pi_{r}$ (equality means natural isomorphism) and $\Pi^n D=\Pi_n\cdot \ldots
\cdot\Pi_1 D$.   We   consider multiple vector bundles in
Section~\ref{sec.appl}.
\end{remark}

We shall explain now the constructions and give a proof of Proposition~\ref{prop.pp} using local coordinates.
The coordinate language is particularly handy for visualizing double
(and multiple) vector bundles; it is as follows. Consider a double
vector bundle given by~\eqref{eq.dvb2}. As above, denote local
coordinates on $M$ by $x^a$. Let $u^i$ and $w^{\a}$ be linear
coordinates on the fibers of $A\to M$ and $B\to M$, respectively. On
$D$ we have coordinates $x^a, u^i, w^{\a}, z^{\mu}$ so that $u^i,
z^{\mu}$ are linear fiber coordinates for $D\to B$ and $w^{\a},
z^{\mu}$, for $D\to A$. Coordinate changes have the form:
\begin{align}
    x^a&=x^a(x'),\\
    u^i&=u^{i'}T_{i'}{}^{i}(x'), \label{eq.lawforu}\\
    w^{\a}&=w^{\a'}T_{\a'}{}^{\a}(x'),\\
    z^{\mu}&=z^{\mu'}T_{\mu'}{}^{\mu}(x')+u^{i'}w^{\a'}T_{\a'
    i'}{}^{\mu}(x')\,. \label{eq.lawforz}
\end{align}
(Note that the submanifold  specified by the equations $u^i=0$ and $w^{\a}=0$ is the core $K$ of the double vector bundle   $D$ and the variables $z^{\mu}$ restricted to $K$ become  fiber coordinates for the vector bundle $K\to M$.)
This is a convenient coordinate description of a double vector
bundle structure.
(The reader who lacks a taste for coordinates may
translate everything into a language of local trivializations.)
In
particular, for two weights we have $\w_1=\# u+\# z$ and $\w_2=\#
w+\# z$, where $\#$ denotes the degree in the respective variable.
Everything extends directly to multiple vector bundles, see Section~\ref{sec.appl}.

Partial parity reversions can now be described as follows.

For the double vector bundle given by~\eqref{eq.pa},
we have $x^a, u^i, \h^{\a}, \t^{\mu}$ as coordinates on $\Pi_AD$, so
that $\widetilde{\h^{\a}}=\widetilde{w^{\a}}+1=\widetilde{\a}+1$,
$\widetilde{\t^{\mu}}=\widetilde{z^{\mu}}+1=\widetilde{\mu}+1$, and
the changes of coordinates are
\begin{align*}
    u^i&=u^{i'}T_{i'}{}^{i},\\
    \h^{\a}&=\h^{\a'}T_{\a'}{}^{\a},\\
    \t^{\mu}&=\t^{\mu'}T_{\mu'}{}^{\mu}+(-1)^{\itt'}u^{i'}\h^{\a'}T_{\a'
    i'}{}^{\mu}\,,
\end{align*}
where we suppress coordinates on $M$. Here $\h^{\a}, \t^{\mu}$ are
fiber coordinates for $\Pi_A D\to A$ and $\h^{\a}$ are fiber
coordinates for $\Pi B\to M$. (Here and below, switching from Latin to Greek letters and back is meant to remind about changing of parity.)

Similarly, for
\begin{equation}\label{eq.pb}
    \begin{CD} \Pi_B D@>>>  B\\
                @VVV  @VVV \\
                \Pi A@>>>M
    \end{CD}
\end{equation}
we have $x^a, \x^i, w^{\a}, \e^{\mu}$ as coordinates on $\Pi_BD$,
where $\widetilde{\x^{i}}=\widetilde{u^{i}}+1={\itt}+1$ and
$\widetilde{\e^{\mu}}=\widetilde{z^{\mu}}+1=\widetilde{\mu}+1$, with
the changes of coordinates
\begin{align*}
    \x^i&=\x^{i'}T_{i'}{}^{i},\\
    w^{\a}&=w^{\a'}T_{\a'}{}^{\a},\\
    \e^{\mu}&=\e^{\mu'}T_{\mu'}{}^{\mu}+\x^{i'}w^{\a'}T_{\a'
    i'}{}^{\mu}\,.
\end{align*}

By applying to the bundles $\Pi_A D$ and $\Pi_B D$ the parity reversions in the other directions, we
obtain, respectively, the double vector bundle
\begin{equation} \label{eq.pibpia}
    \begin{CD} \Pi_B\Pi_A D@>>> \Pi B\\
                @VVV  @VVV \\
                \Pi A@>>>M
    \end{CD}
\end{equation}
with coordinates  $x^a, \x^i, \h^{\a}, t^{\mu}$ on $\Pi_B\Pi_A D$
and the transformation law
\begin{align*}
    \x^i&=\x^{i'}T_{i'}{}^{i},\\
    \h^{\a}&=\h^{\a'}T_{\a'}{}^{\a},\\
    t^{\mu}&=t^{\mu'}T_{\mu'}{}^{\mu}+(-1)^{\itt'}\x^{i'}\h^{\a'}T_{\a'
    i'}{}^{\mu}\,,
\end{align*} and the double vector bundle
\begin{equation} \label{eq.piapib}
    \begin{CD} \Pi_A\Pi_B D@>>> \Pi B\\
                @VVV  @VVV \\
                \Pi A@>>>M
    \end{CD}
\end{equation}
with coordinates $x^a, \x^i, \h^{\a}, s^{\mu}$ on $\Pi_A\Pi_B D$ and
the transformation law
\begin{align*}
    \x^i&=\x^{i'}T_{i'}{}^{i},\\
    \h^{\a}&=\h^{\a'}T_{\a'}{}^{\a},\\
    s^{\mu}&=s^{\mu'}T_{\mu'}{}^{\mu}+(-1)^{\itt'+1}\x^{i'}\h^{\a'}T_{\a'
    i'}{}^{\mu}\,.
\end{align*}
{
The explanation of the sign factors appearing before the second terms in the transformation laws for $\t^{\mu}$, $\e^{\mu}$, $t^{\mu}$ and $s^{\mu}$ above is as follows. To calculate the transformation for the fiber coordinates in the parity-reversed vector bundle, we have to express the transformation law in the original bundle so that   its fiber coordinates all stand at the left and then to replace them by the variables of the opposite parity (see, e.g.,~\cite{tv:graded}). So, for example, for $\Pi_AD\to A$, we take $z^{\mu}=z^{\mu'}T_{\mu'}{}^{\mu}(x')+u^{i'}w^{\a'}T_{\a'i'}{}^{\mu}(x')$ and re-write it as $z^{\mu}=z^{\mu'}T_{\mu'}{}^{\mu}(x')+(-1)^{\itt'\att'}w^{\a'}u^{i'}T_{\a'i'}{}^{\mu}(x')$. By replacing
$w,z$ by $\h,\t$, %
%
resp., we arrive at $\t^{\mu}=\t^{\mu'}T_{\mu'}{}^{\mu}(x')+(-1)^{\itt'\att'}\h^{\a'}u^{i'}T_{\a'i'}{}^{\mu}(x')$, which, after re-arranging back to the initial order, gives $\t^{\mu}=\t^{\mu'}T_{\mu'}{}^{\mu}(x')+(-1)^{\itt'}u^{i'}\h^{\a'}T_{\a'i'}{}^{\mu}(x')$ because the parities of $w^{\a'}$ and $\h^{\a'}$ are opposite. A short-cut allowing to visualize this easily is to treat the passage to the linear coordinates of reversed  parity as a formal multiplication (from the left) by an odd symbol $\Pi$ satisfying the usual commutation relations with whatever coefficients (so that in our example $\h$ and $\t$ are regarded as $\Pi w$ and $\Pi z$).
}

Note that the transformation laws for the $\x$- and $\h$-coordinates  are the same on $\Pi_B\Pi_A D$ and $\Pi_A\Pi_B D$ (that is why we used for them the same letters), and the
transformation laws for the $t$-coordinates on $\Pi_B\Pi_A D$ and the $s$-coordinates on $\Pi_A\Pi_B D$ differ only by
a sign. We define a transformation of double vector bundles
\begin{equation*}
    I_{12}\co \Pi_B\Pi_A D\to \Pi_A\Pi_B D
\end{equation*}
by the formulas $I_{12}^*(\x^i)=\x^i$, $I_{12}^*(\h^{\a})=\h^{\a}$, and $I_{12}^*(s^{\mu})=-t^{\mu}$\,. Clearly, it is well-defined and is an isomorphism. If we have an arbitrary morphism of double vector bundles $\Phi\co D_1\to D_2$, it induces morphisms  $\Pi_1\Pi_2 D_1\to \Pi_1\Pi_2 D_2$ and $\Pi_2\Pi_1 D_1\to \Pi_2\Pi_1 D_2$ (here the indices of $\Pi_i$ indicate the ``directions'' of partial parity reversions and the indices of $D_1$, $D_2$ are   the labels of double vector bundles in consideration). In coordinates they are as follows. For the original bundles:
 \begin{align*}
    \Phi^*(x_2^{a_2})&=x_2^{a_2}(x_1),\\
    \Phi^*(u_2^{i_2})&=u_1^{i_1}\Phi_{i_1}^{i_2}(x_1), \\
    \Phi^*(w_2^{\a_2})&=w_1^{\a_1}\Phi_{\a_1}^{\a_2}(x_1),\\
    \Phi^*(z_2^{\mu_2})&=z_1^{\mu_1}\Phi_{\mu_1}^{\mu_2}(x_1)+u_1^{i_1}w_1^{\a_1}\Phi_{\a_1
    i_1}^{\mu_2}(x_1)\,.
\end{align*}
For the induced morphisms of the parity reversed bundles we obtain from here, for $\Pi_1\Pi_2D_1\to \Pi_1\Pi_2D_2\,$:
\begin{align*}
    \Phi^*(\x_2^{i_2})&=\x_1^{i_1}\Phi_{i_1}^{i_2}(x_1), \\
    \Phi^*(\h_2^{\a_2})&=\h_1^{\a_1}\Phi_{\a_1}^{\a_2}(x_1),\\
    \Phi^*(s_2^{\mu_2})&=s_1^{\mu_1}\Phi_{\mu_1}^{\mu_2}(x_1)+(-1)^{\itt_1+1}\x_1^{i_1}\h_1^{\a_1}\Phi_{\a_1
    i_1}^{\mu_2}(x_1)\,,
\end{align*}
and for $\Pi_2\Pi_1D_1\to \Pi_2\Pi_1D_2\,$:
\begin{align*}
    \Phi^*(\x_2^{i_2})&=\x_1^{i_1}\Phi_{i_1}^{i_2}(x_1), \\
    \Phi^*(\h_2^{\a_2})&=\h_1^{\a_1}\Phi_{\a_1}^{\a_2}(x_1),\\
    \Phi^*(t_2^{\mu_2})&=t_1^{\mu_1}\Phi_{\mu_1}^{\mu_2}(x_1)+(-1)^{\itt_1}\x_1^{i_1}\h_1^{\a_1}\Phi_{\a_1
    i_1}^{\mu_2}(x_1)\,.
\end{align*}
(Strictly speaking, we have to use the notations such as $\Phi^{\Pi_1\Pi_2}$ for the induced morphisms, but we abbreviate them  just to $\Phi$. The explanation for the signs is as above.) The transformation $I_{12}$, for each of the double vector bundles, maps $(\x,\h,t)$ to $(\x,\h,s)$ with $s=-t$. Therefore the diagram
\begin{equation*}
    \begin{CD} \Pi_2\Pi_1 D_1@>{\Phi}>> \Pi_2\Pi_1 D_2\\
                @V{I_{12}}VV  @VV{I_{12}}V \\
                \Pi_1\Pi_2 D_1@>>{\Phi}> \Pi_1\Pi_2 D_2
    \end{CD}
\end{equation*}
is commutative; i.e., we see that the transformation $I_{12}$ commutes with the induced morphisms or is ``natural'' in the categorical sense (an isomorphism of functors). This completes a proof of
Proposition~\ref{prop.pp}. \qed

\begin{remark} Our choice of   $I_{12}$ is not the only possible. One could prefer to change the sign of the ``side'' coordinates such as $\x^i$ or $\h^{\a}$. Our choice is to keep the   transformation $I_{12}$ identical on the sides and inducing $-\id$ on the core bundle. {This is the isomorphism used in the main statement below.}
\end{remark}

\section{Main statement}\label{sec.main}

In this section we give our main statement, which is a
characterization of Mackenzie's double Lie algebroids in terms of
graded $Q$-manifolds. Proofs will be given in Sections
\ref{sec.analysis} and \ref{sec.big}.

\begin{definition} \label{def.dalie} A double vector bundle
\begin{equation} \label{eq.dvb3}
    \begin{CD} H@>>> G\\
                @VVV  @VVV \\
                F@>>>M
    \end{CD}
\end{equation}
is a \textbf{double Lie antialgebroid} if the manifold $H$ is
endowed with two homological vector fields $Q_1$ and $Q_2$  of
weights $(1,0)$ and $(0,1)$, respectively, such that
\begin{equation}\label{eq.q1q2}
    [Q_1,Q_2]=0.
\end{equation}
\end{definition}

{
\begin{remark} In our earlier text~\cite{tv:mack}, it was stated erroneously that,
by taking $Q_1+Q_2$, a double antialgebroid structure can be further reduced to a single
homological field $Q$ of total weight $1$. However,   such a field $Q$   decomposed according
to the weights $\w_1$ and $\w_2$ would include terms of
weights $(2,-1)$ and $(-1,2)$    besides the `correct' terms
of weights $(1,0)$ and $(0,1)$. So we indeed need two fields $Q_1$ and $Q_2$   and cannot get away with a single condition on total weight. 
\end{remark}
}

\begin{remark} \label{rem.mantialg}
An extension to \textit{multiple Lie antialgebroids} is immediate.
An \textbf{$n$-fold Lie antialgebroid} is an $n$-fold vector bundle,
which therefore gives rise to an $n$-graded structure on its
(ultimate) total space, endowed with $n$ commuting homological
fields $Q_{r}$, $r=1,\ldots,n$, on the  total space of weights
$(0,\ldots, 1, \ldots, 0)$, respectively. Here $1$ stands at the
$r$-th place, all other weights being zero. We shall elaborate this
in Section~\ref{sec.appl}.
\end{remark}

\begin{example} Let $(E, E^*)$ be a Lie
bialgebroid with base $M$. Then
\begin{equation} \label{eq.dimadouble}
    \begin{CD} T^*\Pi E=T^*\Pi E^*@>>> \Pi E^*\\
                @VVV  @VVV \\
                \Pi E@>>>M
    \end{CD}
\end{equation}
is a double vector bundle, as one can check. This is a superization
of Mackenzie~\cite{mackenzie:doublealg}. The natural diffeomorphism
$T^*\Pi E= T^*\Pi E^*$ in the upper-left corner
of~\eqref{eq.dimadouble} is a superized version~\cite{tv:graded} of
a theorem of Mackenzie and Xu~\cite{mackenzie:bialg} extending a
statement of Tulczyjew~\cite{tulczyjew:1977}. The Lie algebroid

structures in $E$ and $E^*$ give rise to homological fields on $\Pi
E$ and $\Pi E^*$, respectively.  These two vector fields correspond
to two functions (`linear Hamiltonians') on $T^*\Pi E=T^*\Pi E^*$,
of weights $(1,0)$ and $(0,1)$. That $(E,E^*)$ is a  bialgebroid  is
equivalent to the commutativity of these Hamiltonians (due to
Roytenberg~\cite{roytenberg:thesis}, see also \cite{tv:graded}).
Therefore the corresponding Hamiltonian vector fields make the
double vector bundle~\eqref{eq.dimadouble}  a double Lie
antialgebroid. We shall come back to this example in
Section~\ref{sec.appl}.
\end{example}

\begin{theorem}[A Characterization of Double Lie
Algebroids]\label{thm.main} A double Lie algebroid structure  in a
double vector bundle such as~\eqref{eq.dvb2} is equivalent to a
double Lie antialgebroid structure in the complete parity reversion
double vector bundle
\begin{equation} \label{eq.pikvadrat}
    \begin{CD} \Pi^2 D@>>> \Pi B\\
                @VVV  @VVV \\
                \Pi A@>>>M
    \end{CD}
\end{equation}
i.e., to a pair of commuting homological fields $Q_1$ and $Q_2$  of
weights $(1,0)$ and $(0,1)$ on $\Pi^2 D$.
\end{theorem}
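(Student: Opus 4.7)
My plan is to verify the equivalence by translating both sides of the claim into local coordinates on $\Pi^{2}D$ and matching weight-homogeneous components. The first, linearization step uses the local charts and transformation laws \eqref{eq.lawforu}--\eqref{eq.lawforz}: a Lie algebroid structure on one of the main sides, say $D\to A$, corresponds via \eqref{eq.auf}--\eqref{eq.uv} to a homological field on $\Pi_{A}D$ of $A$-weight $+1$. Condition \textrm{I} (that the horizontal structure maps are Lie algebroid morphisms) forces the $B$-weight of this field to be $0$, so its coefficients are polynomial of degree at most one in $w^{\alpha}$. Since fields of $B$-weight $0$ respect the $B$-bundle structure, the field lifts uniquely to a homological field $Q_{1}$ of bi-weight $(1,0)$ on $\Pi_{B}\Pi_{A}D=\Pi^{2}D$; doing the same for $D\to B$ gives $Q_{2}$ of bi-weight $(0,1)$. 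Restricting $Q_{1}$ and $Q_{2}$ to the ``axes'' $\Pi A,\Pi B\hookrightarrow\Pi^{2}D$ recovers the Lie algebroid structures on $A$ and $B$, consistent with the footnote's remark that only two of the four side algebroids are independent.

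The heart of the argument is to show that, given Condition \textrm{I}, the conjunction of Conditions \textrm{II} and \textrm{III} is equivalent to $[Q_{1},Q_{2}]=0$, following the indication in the introduction that Condition \textrm{III} alone subsumes Condition \textrm{II}. Expanding $[Q_{1},Q_{2}]$, which has bi-weight $(1,1)$, in local coordinates, its vanishing splits into several weight-graded pieces. Each piece should match, after accounting for the parity-reversion signs of Section~\ref{sec.parreversion} and the duality theory recalled in Section~\ref{sec.doublealg}, a component of the Lie bialgebroid condition on $(D^{*A},D^{*B})\rightrightarrows K^{*}$. For this step I would invoke the Kosmann-Schwarzbach characterization of bialgebroids together with the Roytenberg identification $T^{*}\Pi E=T^{*}\Pi E^{*}$ used in the Example: pairs of compatible Lie algebroid structures on $(D^{*A},D^{*B})$ translate to weight-homogeneous Poisson-commuting Hamiltonians on $T^{*}\Pi D^{*A}$, and, after identifying that cotangent space with $\Pi^{2}D$ as a bi-graded manifold, the Hamiltonian commutativity becomes precisely $[Q_{1},Q_{2}]=0$. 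Condition \textrm{II} should drop out of the lowest weight-graded piece, which says that the anchor of $Q_{1}$ intertwines with the tangent prolongation induced by $Q_{2}$.

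The converse direction reverses each reconstruction: given weight-homogeneous homological fields $Q_{1},Q_{2}$ with $[Q_{1},Q_{2}]=0$, unravel \eqref{eq.auf}--\eqref{eq.uv} to produce the Lie algebroid brackets and anchors on $D\to A$ and $D\to B$ and read off Conditions \textrm{I}, \textrm{II}, \textrm{III}. Throughout, I would use the isomorphism $I_{12}$ of Proposition~\ref{prop.pp} to move freely between $\Pi_{B}\Pi_{A}D$ and $\Pi_{A}\Pi_{B}D$; the sign this isomorphism introduces on the core is what accounts for the sign conventions needed to align Mackenzie's Condition \textrm{III} with $[Q_{1},Q_{2}]=0$. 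I expect the principal obstacle to be the matching step, specifically the identification of the bialgebroid content of Condition \textrm{III} with a single weight-homogeneous commutator equation: the difficulty is largely combinatorial, since both sides assemble from many graded pieces whose signs depend on the chain of partial parity reversions used, and naturality is easy to lose through a long chain of dualizations. The alternative ``big picture'' route announced for Section~\ref{sec.big}, which handles all twelve neighbor double vector bundles uniformly and reduces compatibility across them to a single functorial principle, should provide a cleaner conceptual reproof once the concrete calculation has been carried out in one neighbor.
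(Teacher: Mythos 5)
Your overall strategy coincides with the paper's computational proof in Section~\ref{sec.analysis}: Condition \textrm{I} pins down the weights of the two homological fields $Q_{DA}$ and $Q_{DB}$ on $\Pi_AD$ and $\Pi_BD$; the vanishing of the complementary weight lets each field undergo a further partial parity reversion so that both land on $\Pi^2D$; and the remaining task is to match $[Q_1,Q_2]=0$ against Conditions \textrm{II} and \textrm{III}, with \textrm{III} subsuming \textrm{II}. Up to that point your plan is sound.

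The gap is in your proposed mechanism for the matching step. You want to convert the bialgebroid condition on $(D^{*A},D^{*B})$ over $K^*$ into Poisson-commutativity of two Hamiltonians on $T^*\Pi D^{*A}$ (Roytenberg's picture) and then ``identify that cotangent space with $\Pi^2D$ as a bi-graded manifold.'' No such identification exists: $T^*\bigl(\Pi_{K^*}D^{*A}\bigr)$ has dimension $2\dim D$, while $\Pi^2D$ has dimension $\dim D$. Roytenberg's cotangent double is a \emph{new, larger} object built from the bialgebroid $(D^{*A},D^{*B})$; it is not one of the twelve neighbors of $D$, so Hamiltonian commutativity there cannot be transported to $\Pi^2D$ by a diffeomorphism. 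The paper closes this step in two other ways: either by brute force --- expanding the derivation property of the field $Q_{DA}^*$ with respect to the Schouten bracket on $\Pi_{K^*}D^{*B}$ into the nine bilinear equations \eqref{eq.bialg1}--\eqref{eq.bialg9} and separately expanding $[Q_1,Q_2]=0$ to find the \emph{same} system --- or by the functorial argument of Section~\ref{sec.big} (Proposition~\ref{prop.equiv}): a fiberwise linear flow preserves a Lie bracket on sections if and only if the anti-adjoint flow preserves the induced linear Schouten bracket on the dual, if and only if the $\Pi$-symmetric flow preserves the corresponding homological field. You need one of these two arguments (or an equivalent one) in place of the cotangent-bundle identification.

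Two smaller points. First, you take ``Condition \textrm{III} subsumes Condition \textrm{II}'' as an input quoted from the introduction, but in the paper this is not a citable hypothesis: it is precisely the \emph{outcome} of comparing the six equations \eqref{eq.anchor1}--\eqref{eq.anchor6} of Condition \textrm{II} with the nine equations of Condition \textrm{III}, so your proof must actually produce both systems (or reproduce Mackenzie's later direct argument). Second, Condition \textrm{III} has a first part --- that the vertical arrows of the dualized diagrams are Lie algebroid morphisms --- which your sketch never addresses; the paper shows it holds automatically once Condition \textrm{I} is in force, and a complete proof must record this.
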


To appreciate the statement one may compare the three conditions in
Definition~\ref{def.dlie} with   simple equation~\eqref{eq.q1q2}.

{
\begin{remark}
We may consider $\Pi^2 D$ as $\Pi_B\Pi_A D$ or as $\Pi_A\Pi_BD$.  The correspondence between the Lie algebroids and the Lie antialgebroids given by Theorem~\ref{thm.main} does not depend on this choice. However, it is important that   the isomorphism $I_{12}$  defined in Proposition~\ref{prop.pp} is used  for an identification of $\Pi_A\Pi_BD$ and $\Pi_B\Pi_AD$.
\end{remark}
}

Let us show how homological vector fields on $\Pi^2 D$ as in the Theorem generate  Lie
algebroid structures on all sides of~\eqref{eq.dvb2}. As in our
discussion of a single Lie algebroid above, everything can be
formulated in a coordinate-free setting. However, using coordinates
sheds some extra light and gets to the formulas quicker.

For the sake of concreteness,  consider $\Pi^2 D=\Pi_B\Pi_A D$
with natural coordinates $x^a, \x^i, \h^{\a}, t^{\mu}$ thereon.
Consider a homological   vector field $Q_1\in \Vect (\Pi^2 D)$ of
weight $(1,0)$ and a homological   vector field $Q_2\in \Vect (\Pi^2
D)$ of weight $(0,1)$. They have the general forms
\begin{multline}\label{eq.fieldqa}
    Q_1=\x^iQ_i^a\der{}{x^a}+\frac{1}{2}\x^i\x^jQ_{ji}^k\der{}{\x^k} \\
    +\left(\x^i\h^{\a}Q_{\a i}^{\b}+t^{\mu}Q_{\mu}^{\b}\right)\der{}{\h^{\b}}
    +\left(\frac{1}{2}\x^i\x^j\h^{\a}Q_{\a ji}^{\lambda}+\x^it^{\mu}Q_{\mu i}^{\lambda}\right)
    \der{}{t^{\lambda}}\,,
\end{multline}
and
\begin{multline}\label{eq.fieldqb}
    Q_2=\h^{\a}Q_{\a}^a\der{}{x^a}+
    \left(\h^{\a}\x^iQ_{i\a}^{j}+t^{\mu}Q_{\mu}^{j}\right)\der{}{\x^{j}}\\
    + \frac{1}{2}\h^{\a}\h^{\b}Q_{\b\a}^{\g}\der{}{\h^{\g}}+
    \left(\frac{1}{2}\h^{\a}\h^{\b}\x^{i}Q_{i \b\a}^{\lambda}+\h^{\a}t^{\mu}Q_{\mu\a}^{\lambda}\right)
    \der{}{t^{\lambda}}\,,
\end{multline}
dictated by their respective weights. All coefficients here are functions of $x^a$. Now,
\textit{due to the fact that the vector field $Q_1$ has weight $0$
w.r.t. the vertical fiber coordinates, it admits partial parity
reversion in this direction}, giving a vector field on $\Pi_B D$ with fiber coordinates $\x^i,w^{\a},\e^{\mu}\,$:
\begin{multline}\label{eq.fieldqapi}
    Q_1^{\Pi}=\x^iQ_i^a\der{}{x^a}+\left((-1)^{\itt}\x^i w^{\a}Q_{\a i}^{\b}+\e^{\mu}Q_{\mu}^{\b}\right)\der{}{w^{\b}}  \\ +\frac{1}{2}\x^i\x^jQ_{ji}^k\der{}{\x^k}
    +\left(\frac{1}{2}(-1)^{\itt+\jtt}\x^i\x^j w^{\a}Q_{\a ji}^{\lambda}+(-1)^{\itt}\x^i\e^{\mu}Q_{\mu i}^{\lambda}\right)
    \der{}{\e^{\lambda}}
\end{multline}
(we have regrouped terms here).
Similarly, $Q_2$ \textit{admits the vertical parity reversion},
which gives a vector field on   $\Pi_A D$ with fiber coordinates $u^i,\h^{\a},\t^{\mu}\,$:
\begin{multline}\label{eq.fieldqbpi}
    Q_2^{\Pi}=\h^{\a}Q_{\a}^a\der{}{x^a}+
    \left((-1)^{\att}\h^{\a}u^iQ_{i\a}^{j}-\t^{\mu}Q_{\mu}^{j}\right)\der{}{u^{j}}\\
    + \frac{1}{2}\h^{\a}\h^{\b}Q_{\b\a}^{\g}\der{}{\h^{\g}}+
    \left(-(-1)^{\att+\btt}\frac{1}{2}\h^{\a}\h^{\b}u^{i}Q_{i \b\a}^{\lambda}+(-1)^{\att}\h^{\a}\t^{\mu}Q_{\mu\a}^{\lambda}\right)
    \der{}{\t^{\lambda}}\,.
\end{multline}
(To find these vector fields, one has to write down the infinitesimal flows of the original fields, which will be linear in the corresponding directions,  and apply the parity reversions to them.)

Both $Q_1^{\Pi}$ and $Q_2^{\Pi}$ are again homological fields. They
define Lie antialgebroid structures on the vector bundles $\Pi_BD\to
B$ and $\Pi_A D\to A$, which correspond to Lie algebroid structures
on $D\to B$ and $D\to A\,$, respectively.

The restrictions of $Q_1^{\Pi}$ and $Q_2^{\Pi}$ on $\Pi A$ and $\Pi
B$, respectively, treated as submanifolds (zero sections) in
$\Pi_BD$ and $\Pi_A D$ are tangent to these submanifolds and define
the homological vector fields
\begin{equation}\label{eq.fieldqapi2}
    Q_1^{(0)}=\x^iQ_i^a\der{}{x^a}+\frac{1}{2}\x^i\x^jQ_{ji}^k\der{}{\x^k}
\end{equation}
on $\Pi A$ and
\begin{equation}\label{eq.fieldqbpi2}
    Q_2^{(0)}=\h^{\a}Q_{\a}^a\der{}{x^a}
    + \frac{1}{2}\h^{\a}\h^{\b}Q_{\b\a}^{\g}\der{}{\h^{\g}}
\end{equation}
on $\Pi B$. This gives   Lie algebroid structures on $A\to M$ and
$B\to M$.

\section{Analysis of Mackenzie's conditions}\label{sec.analysis}

To prove Theorem~\ref{thm.main}, we shall go in the direction
opposite to that in the previous section. (For this reason, we shall change our notation for the homological vector fields slightly, as the reader will see.) Consider a double vector
bundle given by~\eqref{eq.dvb}. \emph{Assume that all four sides are Lie
algebroids and describe them by homological vector fields.} Then we
study conditions \textrm{I}, \textrm{II} and \textrm{III} of Definition~\ref{def.dlie} and see
what they mean in terms of these fields.

Recall the notion of a \textit{Lie algebroid morphism}. It is
non-obvious for  different bases.
See~\cite[$\S$4.3]{mackenzie:book2005} for the definition. Instead
of it, we shall use the  following  statement:

\begin{proposition}[Va\u{\i}ntrob]\label{prop.lamorphism} Suppose $E_1\to M_1$ and $E_2\to M_2$ are
Lie algebroids defined by homological vector fields $Q_1\in \Vect
\Pi E_1$ and $Q_2\in \Vect \Pi E_2$. A vector bundle map
\begin{equation*}
    \begin{CD} E_1 @>{\Phi}>> E_2 \\
                @VVV    @VVV\\
                M_1 @>{\varphi}>> M_2
    \end{CD}
\end{equation*}
is a Lie algebroid morphism
if and only if  the vector  fields $Q_1$ and $Q_2$ are
$\Phi^{\Pi}$-related, where $$\Phi^{\Pi}\co \Pi E_1\to \Pi E_2$$ is
the  induced map of the opposite vector bundles.
\end{proposition}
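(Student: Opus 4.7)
The plan is to reduce the statement to the derived-bracket description of Lie algebroids recalled in Section~\ref{sec.algebroids} and then to unpack the $\Phi^{\Pi}$-relatedness condition into the standard anchor and bracket compatibility conditions, matching the latter with Mackenzie's definition of a Lie algebroid morphism over distinct bases.

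First I would fix local coordinates $x^a, u^i$ on $E_1$, with associated odd fiber coordinates $\xi^i$ on $\Pi E_1$, and $y^A, v^I$ on $E_2$, with odd fiber coordinates $\eta^I$ on $\Pi E_2$. In these coordinates $\Phi$ has the form $y^A=\varphi^A(x)$, $v^I = u^i \Phi_i^I(x)$, while $\Phi^{\Pi}$ is given by the same formulas with $\xi, \eta$ replacing $u, v$. The homological fields $Q_1, Q_2$ have the standard weight-one expressions whose coefficients encode the anchors and structure constants of the two Lie algebroids. Imposing $Q_1 \circ (\Phi^{\Pi})^{*} = (\Phi^{\Pi})^{*} \circ Q_2$ on the generators $y^A$ and $\eta^I$ separately produces two algebraic identities: the identity extracted from $y^A$ is exactly the anchor intertwining $T\varphi \circ a_1 = a_2 \circ \Phi$, while the identity extracted from $\eta^I$ relates the bracket structure constants and, after grouping the quadratic terms in $\xi$, takes the form of the Leibniz-corrected bracket compatibility between $\Phi$-decomposed sections.

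The second step is to match these two identities with Mackenzie's definition from \cite[$\S$4.3]{mackenzie:book2005}, in which, because pullback sections of $E_2$ need not exist globally, the bracket condition is phrased through local decompositions $\Phi \circ u = \sum_{s} f_s \cdot (v_s \circ \varphi)$ of $\varphi$-related sections, with an explicit Leibniz correction involving the anchor of $E_1$ acting on the coefficients $f_s$. The key observation I will use is that $\Phi^{\Pi}$-relatedness is $\fun(M_1)$-linear in the test function on $\Pi E_2$, so it is enough to test on a local frame $\eta^I$; the algebraic identity obtained there is exactly the frame shadow of Mackenzie's global condition, and, conversely, Mackenzie's condition plainly implies the identity on $\eta^I$.

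The main obstacle will be the bookkeeping for different bases: one must check that the coordinate identity on $\eta^I$, a priori only a statement about local frames, recovers Mackenzie's global condition on arbitrary decompositions of $\varphi$-related sections. This reduces to comparing two expressions that depend affinely on the decomposition data, with the zeroth-order part (the frame identity) and the first-order part (the anchor identity) matching; the Leibniz corrections on each side then cancel automatically. This is essentially the content of Va\u{\i}ntrob's coordinate-free argument and completes the proof.
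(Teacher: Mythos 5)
The paper does not actually prove this proposition: it is quoted as Va\u{\i}ntrob's statement, noted to have first appeared without proof in his announcement, and then simply used as a working substitute for Mackenzie's definition. So there is no ``paper proof'' to match against; your outline supplies the standard argument that the paper leaves implicit, and it is the right one. The two halves you identify are correct: since the defect $Q_1\circ(\Phi^{\Pi})^*-(\Phi^{\Pi})^*\circ Q_2$ is a derivation over the algebra map $(\Phi^{\Pi})^*$, it vanishes iff it vanishes on the generators $y^A$ and $\eta^I$, and these two evaluations give exactly the anchor-intertwining identity and the frame form of the bracket identity. The one step that genuinely needs care, and which you correctly flag but dispose of a little quickly, is the passage from the frame identity to Mackenzie's condition for \emph{arbitrary} decompositions $\Phi\circ u=\sum_s f_s\,(v_s\circ\varphi)$ of $\varphi$-related sections: this is precisely the statement that Mackenzie's bracket condition is independent of the chosen decomposition once the anchor condition holds. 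You can either quote that well-definedness lemma from \cite[$\S$4.3]{mackenzie:book2005} or re-derive it by your affine-dependence argument, but one of the two must be done explicitly, since otherwise you have only verified the morphism property for the canonical frame decomposition. Two minor points: the proposition in this paper is stated for \emph{super} Lie algebroids, so your coordinate computation should carry the parity signs (the identity extracted from $\eta^I$ acquires factors of $(-1)^{\tilde\imath}$ in the quadratic terms, exactly as in the bracket formulas of Section~\ref{sec.algebroids}); and your phrase that relatedness is ``$\fun(M_1)$-linear in the test function'' is not literally right --- the defect is a derivation over $(\Phi^{\Pi})^*$, not a module map --- though the conclusion you draw from it (it suffices to test on generators) is correct.
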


This statement first appeared, without proof, in~\cite{vaintrob:algebroids}. It is equivalent to saying that
$\Phi^{\Pi}\co \Pi E_1\to \Pi E_2$ is a morphism of Lie
antialgebroids. This condition is much easier to handle than the original
definition of Lie algebroid morphisms.
Recall that vector fields on two (super)manifolds are $F$-\textit{related} (or are \emph{intertwined by $F$}),
for a smooth map $F$, if on smooth functions $F^*\circ Y=X\circ F^*$. In terms of the local flows $g_t$ and $h_t$ generated by $Y$ and $X$, this means that the map $F$ intertwines   the  flows:
$g_t\circ F=F\circ h_t$.

Let us introduce some  notation. We are given  Lie algebroid structures
in the vector bundles $D\to B$ and $A\to M$, and in $D\to A$ and
$B\to M$. Hence we have four homological vector fields:  $Q_{DB}\in \Vect
(\Pi_BD)$,  $Q_{AM}\in \Vect (\Pi A)$,  $Q_{DA}\in \Vect (\Pi_AD)$
and $Q_{BM}\in \Vect (\Pi B)$. In the notation from Section~\ref{sec.background}, the   vector bundle $\Pi_BD\to B$ has fiber coordinates $\x^i, \e^{\mu}$. The vector field $Q_{DB}$ should have
weight $1$ in these variables. In the same way, the   vector bundle $\Pi A \to M$ has fiber coordinates $\x^i$ and the vector field   $Q_{AM}$ is of weight $1$ in
$\x^i$. Similarly, $Q_{DA}$ has weight $1$ in the variables
$\h^{\a},\t^{\mu}$ and $Q_{BM}$, in the variables $\h^{\a}$ (the fiber coordinates for the vector bundles $\Pi_AD\to A$ and $\Pi B\to M$, respectively).

\smallskip
We shall now study conditions \textrm{I}, \textrm{II} and \textrm{III} one by one.

\subsection{Condition \em{I}}
{Condition \textrm{I}} is the easiest for  analysis.

Consider for concreteness the horizontal algebroid structures.
Condition \textrm{I} requires that all vertical structure maps: bundle
projections, zero sections, fiberwise addition and multiplication by
scalars, give morphisms of Lie algebroids. We have the following
diagrams to analyze:

\begin{equation*}
    \begin{CD} D @>>> B \\
                @V{p}VV    @VV{p}V\\
                A @>>> M
    \end{CD}
\text{\quad and \quad }
\begin{CD} D @>>> B \\
                @A{i}AA    @AA{i}A\\
                A @>>> M
    \end{CD}
\end{equation*}
and
\begin{equation*}
    \begin{CD} D \times_A D @>>> B\times_M B \\
                @V{+_A}VV    @VV{+_M}V\\
                D @>>> B
    \end{CD}
\text{\quad and \quad }
\begin{CD} D @>>> B \\
                @V{t_A}VV    @VV{t_M}V\\
                D @>>> B
    \end{CD}
\end{equation*}

In the language of homological vector fields, we see that the flows
generated by the vector fields $Q_{DB}$ and $Q_{AM}$ should commute
with all the vertical structure maps above, more precisely, with the
maps induced on the total spaces of the parity reversed horizontal
vector bundles. Commuting with the projection $p$ means that (the flow
of) the vector field $Q_{DB}$ acts fiberwise on the total space of
$\Pi_BD\to \Pi A$ and induces  on $\Pi A$ (the flow of) the vector
field $Q_{AM}$. Hence  $Q_{AM}$ is completely determined by
$Q_{DB}$. Consider the action of the flow of $Q_{DB}$ on the fibers
of $\Pi_BD\to \Pi A$. Commutativity with   fiberwise multiplication
by scalars, $t_{A}\co \Pi_BD \to \Pi_BD$, and addition, ${+}_{A}\co
\Pi_BD \times_{\Pi A} \Pi_BD \to \Pi_BD$, means that the flow of
$Q_{DB}$ is fiberwise linear (over $\Pi A$). This is equivalent to
the vector field $Q_{DB}$ having weight $0$ w.r.t. fiber coordinates
on $\Pi_BD\to \Pi A$. Commutativity with the zero section $i\co \Pi A\to
\Pi_BD$ then comes about automatically.

We may summarize: if the horizontal Lie algebroid structures are
described by homological vector fields  $Q_{DB}$ and $Q_{AM}$, then
\textit{Condition \em{I} of Definition~\ref{def.dlie} is equivalent to
$Q_{DB}$ having vertical weight $0$ (its horizontal weight being $1$)
and  $Q_{AM}$ being the restriction of $Q_{DB}$ to the base $\Pi
A\subset \Pi_BD$}.

In the same way, for  the vector fields $Q_{DA} \in \Vect
(\Pi_A D)$ and $Q_{BM} \in \Vect (\Pi B)$ describing vertical Lie
algebroid structures, we obtain that \textit{ $Q_{DA}$ should have weight
$(0,1)$ on $\Pi_AD$ and $Q_{BM}$  be its restriction to $\Pi B$}.

In coordinates we arrive at the following general expressions for $Q_{DB}$,  $Q_{AM}$,  $Q_{DA}$
and $Q_{BM}$ dictated by their weights.  For the vector fields describing the horizontal Lie algebroid structures:
\begin{multline}\label{eq.qdb}
    Q_{DB}=\x^iQ_i^a\der{}{x^a}+\left(\x^i w^{\a}Q_{\a i}^{\b}+\e^{\mu}Q_{\mu}^{\b}\right)\der{}{w^{\b}} \\
    +\frac{1}{2}\x^i\x^jQ_{ji}^k\der{}{\x^k}
    +\left(\frac{1}{2}\x^i\x^j w^{\a}Q_{\a ji}^{\lambda}+\x^i\e^{\mu}Q_{\mu i}^{\lambda}\right)
    \der{}{\e^{\lambda}}\,,
\end{multline}
a vector field of weight $(1,0)$ on $\Pi_B D\,$, and
\begin{equation}\label{eq.qam}
    Q_{AM}=\x^iQ_i^a\der{}{x^a}+\frac{1}{2}\x^i\x^jQ_{ji}^k\der{}{\x^k}\,,
\end{equation}
a vector field  of weight $1$ on $\Pi A$, which is a  restriction of $Q_{DB}$
on $\Pi A\,$. Similarly, for  the vector fields  describing the vertical Lie
algebroid structures, we have the following general form:
\begin{multline}\label{eq.qda}
    Q_{DA}=\h^{\a}Q_{\a}^a\der{}{x^a}+
    \left(\h^{\a}u^iQ_{i\a}^{j}+\t^{\mu}Q_{\mu}^{j}\right)\der{}{u^{j}}\\
    + \frac{1}{2}\h^{\a}\h^{\b}Q_{\b\a}^{\g}\der{}{\h^{\g}}+
    \left(\frac{1}{2}\h^{\a}\h^{\b}u^{i}Q_{i \b\a}^{\lambda}+\h^{\a}\t^{\mu}Q_{\mu\a}^{\lambda}\right)
    \der{}{\t^{\lambda}}\,,
\end{multline}
and
\begin{equation}\label{eq.qbm}
    Q_{BM}=\h^{\a}Q_{\a}^a\der{}{x^a}
    + \frac{1}{2}\h^{\a}\h^{\b}Q_{\b\a}^{\g}\der{}{\h^{\g}}\,.
\end{equation}

\textbf{Warning:} formulas~\eqref{eq.qdb}--\eqref{eq.qbm} so obtained are similar to formulas~\eqref{eq.fieldqapi}--\eqref{eq.fieldqbpi2}, but there is a  different sign convention in the notation for the coefficients. We shall stick to the conventions as in~\eqref{eq.qdb}--\eqref{eq.qbm} from now on. This   choice of signs  is explained by the wish to have simpler expressions for   $Q_{DB}$ and $Q_{DA}$ arising  as   primary objects.

For the homological vector fields given by~\eqref{eq.qdb}--\eqref{eq.qbm},  we need to
deduce further constraints  corresponding to Mackenzie's Conditions \textrm{II} and \textrm{III}.
\begin{remark}
It is worth re-iterating that
the  vector fields $Q_{DB}$ and $Q_{DA}$ determining the horizontal
and vertical Lie algebroid structures for $D$   are defined on
different supermanifolds $\Pi_BD$ and $\Pi_AD$.
The crucial fact however is that each of them has weight zero in the second
direction. This will  allow  us to apply an additional parity
reversion and arrive finally at  a pair of vector fields $Q_1$  and $Q_2$ defined on a
common domain $\Pi^2 D$  (by using the theory developed in \S\ref{sec.parreversion}).
\end{remark}

\subsection{Condition {\em II}}

Consider Condition \textrm{II} of Definition~\ref{def.dlie}. For the diagram
\begin{equation} \label{eq.datotbtm1}
    \begin{CD} D@>a>> TB\\
                @VVV  @VVV \\
                A@>a>>TM
    \end{CD}
\end{equation}
which is supposed to give a Lie algebroid morphism
\begin{equation}\label{eq.datotbtm}
\begin{CD}
    \begin{CD} D\\
                @VVV \\
                A
    \end{CD}
    @>>>
    \begin{CD}  TB\\
                 @VVV \\
                TM
    \end{CD}
    \end{CD}
\end{equation}
we first need to explicate the tangent prolongation  Lie algebroid
$TB\to TM$. The definition is in~\cite[$\S$9.7]{mackenzie:book2005}.
We shall use the following proposition allowing to work directly with the homological vector fields.

\begin{proposition} \label{prop.tangprol} The tangent prolongation Lie algebroid $TE\to TM$ of a
Lie algebroid $E\to M$ specified by a homological vector field $Q\in \Vect (\Pi E)$ is given by the tangent prolongation vector field $\hat Q$, which is an (automatically homological) vector field  on $T(\Pi E)=\Pi_{TM}TE$.
\end{proposition}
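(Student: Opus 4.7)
My plan is to construct $\hat Q$ as the \emph{complete (tangent) lift} of $Q$, verify that it lives on the claimed supermanifold, check that it is homological and of the correct weight, and finally identify the induced antialgebroid structure with Mackenzie's tangent prolongation.

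First, recall that for any (super)manifold $N$ and any vector field $X \in \Vect(N)$ there is a canonical tangent lift $\hat X \in \Vect(TN)$, generating the flow $T\phi_t$ where $\phi_t$ is the flow of $X$. In natural coordinates $x^a, \dot x^a$ on $TN$, if $X = X^a(x)\,\partial/\partial x^a$ then $\hat X = X^a(x)\,\partial/\partial x^a + \dot x^b\,(\partial X^a/\partial x^b)\,\partial/\partial \dot x^a$. The assignment $X \mapsto \hat X$ is an injective homomorphism of Lie (super)algebras, so $\widehat{[X,Y]} = [\hat X, \hat Y]$. Applying this to $X=Y=Q$ and using $[Q,Q] = 2Q^2 = 0$ yields $[\hat Q, \hat Q] = 0$, i.e.\ $\hat Q$ is homological. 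This is step one.

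Next I need the canonical identification $T(\Pi E) \cong \Pi_{TM}TE$. The tangent functor commutes with parity reversion on vector bundles because both operations are natural and fiberwise linear: introducing fiber coordinates $u^i$ in $E\to M$, the variables $u^i$ and $\dot u^i$ are linear fiber coordinates of $TE \to TM$, and the total parity reversion in those fibers produces coordinates $\xi^i, \dot\xi^i$ which are exactly the natural coordinates on $T(\Pi E)$. This identification also respects the two relevant gradings: the weight on $\Pi E$ (degree in $\xi^i$) lifts to the weight on $T(\Pi E)$ counting degree in $\xi^i$ and $\dot\xi^i$ jointly, which matches the fiber weight of $\Pi_{TM}TE\to TM$. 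Since $Q$ has weight $1$ and tangent lifting preserves this weight, $\hat Q$ has weight $1$ on $\Pi_{TM}TE$, so it defines a Lie antialgebroid structure on $TE \to TM$ by Section~\ref{sec.algebroids}.

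The main step is to match the resulting bracket and anchor with Mackenzie's tangent prolongation. I would carry this out by testing on the two distinguished classes of sections of $TE \to TM$ that generate the module: the \emph{tangent lifts} $\hat u$ of sections $u$ of $E$, and the \emph{core sections} (vertical lifts) $u^\uparrow$ associated with $u$. Under the injection $i$ of~\eqref{eq.inj} applied to $\Pi_{TM}TE$, the section $\hat u$ corresponds to the tangent lift $\widehat{i(u)}$ of the odd vector field $i(u)$ on $\Pi E$, while the core section $u^\uparrow$ corresponds to the vertical lift $i(u)^\uparrow$, which is the substitution $\xi^i \leftrightarrow \dot\xi^i$ applied to $i(u)$. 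Using the naturality $\widehat{[Q,i(u)]}=[\hat Q, \widehat{i(u)}]$ and the standard calculus of complete and vertical lifts, the derived brackets formulas~\eqref{eq.auf}--\eqref{eq.uv} applied to $\hat Q$ reproduce the tangent prolongation rules (e.g.\ $[\hat u,\hat v] = \widehat{[u,v]}$, $[\hat u, v^\uparrow] = [u,v]^\uparrow$, $[u^\uparrow, v^\uparrow]=0$, anchor $a(\hat u)=\widehat{a(u)}$, $a(u^\uparrow)=a(u)^\uparrow$) which constitute Mackenzie's definition in~\cite[$\S$9.7]{mackenzie:book2005}.

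The main obstacle I foresee is the bookkeeping of weights and sign factors in the $\ZZ_2$-graded setting for the vertical lift (since $i(u)$ is odd), where I must be careful that the coordinate swap realizing the vertical lift is an odd operation in the super case; this is what forces the identification $T(\Pi E) = \Pi_{TM}TE$ rather than a naive $\Pi TE$. Once this identification is used systematically, uniqueness of the tangent prolongation (it is determined by its values on $\hat u$ and $u^\uparrow$) closes the proof.
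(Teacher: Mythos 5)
Your plan is correct and follows essentially the same route as the paper: form the tangent lift $\hat Q$ (automatically homological since tangent prolongation is a Lie algebra homomorphism), identify $T(\Pi E)$ with $\Pi_{TM}TE$, and compare the derived brackets of $\hat Q$ against Mackenzie's tangent prolongation on the two generating classes of sections --- the paper does exactly this in coordinates, using the frame $\be_i=T(e_i)$ and the core sections $\bbe_i=\widehat{e_i}$, which are your $\hat u$ and $u^{\uparrow}$.
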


{

Note that, for any vector bundle $E\to M$, taking tangents leads to a
double vector bundle
\begin{equation*}
    \begin{CD} TE@>>> E\\
                @VVV  @VVV \\
                TM@>>>M
    \end{CD}
\end{equation*}
(see~\cite[$\S$3.4]{mackenzie:book2005}).  In particular,    partial parity
reversions make sense.

Proposition~\ref{prop.tangprol}  is a completely natural statement and should be known to experts; however, when we needed it in~\cite{tv:mack}, we could not find it in the literature and had to check it ourselves. (In fact, the classical version\,---\,in terms of linear Poisson structures\,---\, was in~\cite[Theorem 5.6]{mackenzie:bialg}
\footnote{{ Compare with T.~Courant \cite[Theorem 6]{tcourant:tangent} where tangent Poisson structures were used for a  definition. This paper also contains    coordinate expressions for the tangent  Lie algebroid structure.}}.) For completeness, here we   include a proof, which can go as follows.

\begin{proof}[Proof of Proposition~\ref{prop.tangprol}]
First, we  express the bracket and the anchor for the tangent prolongation Lie algebroid in terms of a local basis of sections of $TE$ over $TM$. We shall use the notation of section~\ref{sec.algebroids}, so $e_i$ is a local frame for $E\to M$ and $u^i$ are the corresponding fiber coordinates. The induced  local frame for  $TE\to TM$ may be denoted $(\be_i, \bbe_i)$, so that $u^i\be_i+\dot u^i\bbe_i$ is invariant. In Mackenzie's notation,  $\bbe_i=\widehat{e_i}$\,;   these elements correspond to the basis $e_i$ of the core, which   in this case is the vector bundle $E\to M$ itself. At the same time, we have $\be_i=T(e_i)$, the tangents of the sections $e_i$. The definition of the bracket of sections of $TE\to TM$ given in~\cite[$\S$9.7]{mackenzie:book2005} translates, for the basis sections, into the equations
\begin{equation*}
    [\be_i,\be_j]=Q_{ij}^k\be_k\,+\,\dot Q_{ij}^k\bbe_k\,, \quad [\be_i,\bbe_j]=Q_{ij}^k\bbe_k
\,, \quad [\bbe_i,\bbe_j]=0\,.
\end{equation*}
Here, to avoid extra signs, the formulas are shown for the case when $E$ and $M$ are ordinary manifolds. There is no problem to write them in the general case. The  expressions such as $\dot Q_{i}^a$ stand for $\dot x^b \p_b Q_{i}^a$, etc. The definition of the anchor~\cite[$\S$9.7]{mackenzie:book2005} translates, similarly, into
\begin{equation*}
    a(\be_i)=Q_i^a\der{}{x^a}\,+\, \dot Q_i^a \der{}{\dot x^a}\,, \quad a(\bbe_i)= Q_i^a \der{}{\dot x^a}\,.
\end{equation*}
Now, we can compare this with the tangent prolongation of the   vector field on $\Pi E$\,,
\begin{equation*}
    Q=\x^iQ_i^a\,\der{}{x^a}+\frac{1}{2}\,\x^i\x^j Q_{ji}^k\,\der{}{\x^k}\,,
\end{equation*}
corresponding to the Lie algebroid structure of $E\to M$. The tangent prolongation $\hat Q$ is
\begin{multline*}
    \hat Q=\x^iQ_i^a\,\der{}{x^a}\,+\,\frac{1}{2}\,\x^i\x^j Q_{ji}^k\,\der{}{\x^k}   \\
    + \left(\dot\x^iQ_i^a + \x^i\dot Q_i^a \right)\,\der{}{\dot x^a} + \left(\dot \x^i\x^j Q_{ji}^k + \frac{1}{2}\,\x^i\x^j \dot Q_{ji}^k\right)\,\der{}{\dot \x^k}\,.
\end{multline*}
It is a vector field on $T(\Pi E)=\Pi_{TM} TE$. It is automatically homological because tangent prolongation maps the commutator of vector fields to the commutator of the prolongations.
One can  see immediately that the coefficients of this $\hat Q$ define precisely the brackets and the anchor written above, which completes the proof.  
\end{proof}

Coming back to the analysis of Condition \textrm{II}, we see that by differentiating the field $Q_{BM}$ given by~\eqref{eq.qbm}, we obtain a vector field $\hat Q_{BM}$ on $T(\Pi
B)=\Pi_{TM} TB$,
\begin{multline}\label{eq.qbmhat}
    \hat Q_{BM}=\h^{\a}Q_{\a}^a\der{}{x^a}
    + \frac{1}{2}\h^{\a}\h^{\b}Q_{\b\a}^{\g}\der{}{\h^{\g}}\\+
    \left(\dot\h^{\a}Q_{\a}^a+\h^{\a}\dot Q_{\a}^a\right)\der{}{\dot
    x^a}+
    \left(\dot\h^{\a}\h^{\b}Q_{\b\a}^{\g}+
    \frac{1}{2}\h^{\a}\h^{\b}\dot Q_{\b\a}^{\g}\right)\der{}{\dot\h^{\g}}
    \,,
\end{multline}
which is, by Proposition~\ref{prop.tangprol},   the homological vector field defining the tangent
prolongation Lie algebroid in~\eqref{eq.datotbtm}.

In order to simplify notation,   we shall continue writing all the  formulas in the rest of this section
for the case when  $D, A, B, M$ are ordinary manifolds, not
supermanifolds. This will allow us to avoid extra signs. Obviously,
everything carries over to the general super case.

}

Recall the formula for the anchor map $a\co D\to TB$.
From~\eqref{eq.qdb} we can obtain
\begin{equation}\label{eq.api}
    \dot x^a=u^iQ_i^a, \quad
    \dot \h^{\b}=u^i \h^{\a}Q_{\a i}^{\b}+ \t^{\mu}Q_{\mu}^{\b}
\end{equation}
for the corresponding map $\Pi_A D\to T(\Pi B)$.

We see that the condition that~\eqref{eq.datotbtm} is a morphism of
Lie algebroids translates into the condition that the map given
by~\eqref{eq.api} intertwines the vector fields $Q_{DA}$ and $\hat Q_{BM}$ given by~\eqref{eq.qda}
and~\eqref{eq.qbmhat}. After   simplification, this gives the
following four equations:
\begin{equation}\label{eq.anchor1}
    Q_{\mu}^{\b}\,Q_{\b}^a=Q_{\mu}^j\,Q_j^a\,,
\end{equation}
\begin{equation}\label{eq.anchor2}
    Q_{\a i}^{\b}\,Q_{\b}^a+Q_i^b\,\p_b
    Q_{\a}^a=Q_{\a}^b\,\p_bQ_i^a+Q_{i\a}^j\,Q_j^a\,,
\end{equation}
\begin{multline}\label{eq.anchor3}
    Q_{[\a
    i}^{\d}\,Q_{\b]\d}^{\g}+Q_i^b\,\p_bQ_{\b\a}^{\g}=
    Q_{[\a}^a\,\p_aQ_{\b]i}^{\g}+Q_{i[\a}^j\,Q_{\b]j}^{\g} +
Q_{\b\a}^{\d}\,Q_{\d i}^{\g}  +
Q_{i\b\a}^{{\lam}}\,Q_{{\lam}}^{\g}\,,
\end{multline}
and
\begin{equation}\label{eq.anchor4}
    Q_{\mu}^{\a}\,Q_{\b\a}^{\g}=-Q_{\b}^a\,\p_aQ_{\mu}^{\g}+Q_{\mu}^j\,Q_{\b
    j}^{\g}-Q_{\mu\b}^{{\lam}}\,Q_{{\lam}}^{\g}
\end{equation}
(the square brackets denote alternation in the appropriate indices, e.g., $Q_{[\a i}^{\d}\,Q_{\b]\d}^{\g}=Q_{\a i}^{\d}\,Q_{\b\d}^{\g}-Q_{\b i}^{\d}\,Q_{\a\d}^{\g}$).

To complete the analysis of Condition \textrm{II}, we have to consider
diagrams similar to~\eqref{eq.datotbtm1} and \eqref{eq.datotbtm}:
\begin{equation*}
    \begin{CD} D@>>> B\\
                @V{a}VV  @VV{a}V \\
                 TA@>>>TM
    \end{CD}
    \text{\quad and \quad}
    \begin{CD}
    \begin{CD} D@>>> B
    \end{CD}\\
    @VVV \\
    \begin{CD}  TA @>>>  TM
    \end{CD}
    \end{CD}
\end{equation*}
with $A$ and $B$ interchanged. This adds two equations to
the system~\eqref{eq.anchor1}--\eqref{eq.anchor4}:
\begin{multline}\label{eq.anchor5}
    Q_{[i\a}^l\,Q_{j]l}^k+ Q_{\a}^b\,\p_bQ_{ji}^k=Q_{[i}^a\,\p_aQ_{j]\a}^k+
    Q_{\a[i}^{\b}\,Q_{j]\b}^k+ Q_{ji}^l\,Q_{l\a}^k  +  Q_{\a
    ji}^{{\lam}}\,Q_{{\lam}}^k
\end{multline}
and
\begin{equation}\label{eq.anchor6}
    Q_{\mu}^i\,Q_{ji}^k=-Q_j^a\,\p_aQ_{\mu}^k+Q_{\mu}^{\b}\,Q_{j\b}^k-Q_{\mu
    j}^{{\lam}}\,Q_{{\lam}}^k\,.
\end{equation}
The system of equations~\eqref{eq.anchor1}--\eqref{eq.anchor6} is symmetric w.r.t. the exchange of $A$ and $B$. (Note that each of equations~\eqref{eq.anchor1} and \eqref{eq.anchor2} is symmetric w.r.t. this exchange. Equations~\eqref{eq.anchor3} and ~\eqref{eq.anchor4} are transformed to equations~\eqref{eq.anchor5} and \eqref{eq.anchor6}, respectively.)

\textit{The system of
equations~\eqref{eq.anchor1}--\eqref{eq.anchor6} is equivalent to
Condition {\em II} of Definition~\ref{def.dlie}.} Notice that it is
bilinear in the vector fields $Q_{DA}$ and $Q_{DB}$.

\subsection{Condition {\em III}, and  conclusion of the
proof}

Condition \textrm{III} requires ``deciphering'' more than the other conditions. It consists of two parts.
First of all, we have to consider the diagrams:
\begin{equation}\label{eq.dualda}
    \begin{CD} D^{*A}@>>> K^*\\
                @VVV  @VVV \\
                A@>>>M
    \end{CD}
\end{equation}
and
\begin{equation}\label{eq.dualdb}
    \begin{CD} D^{*B}@>>> B\\
                @VVV  @VVV \\
                K^*@>>>M
    \end{CD}
\end{equation}
and understand why the top horizontal arrow in~\eqref{eq.dualda} and
the left vertical arrow in~\eqref{eq.dualdb}  are   Lie algebroids with base $K^*$. This involves recalling some theory due to Mackenzie.

The \emph{core} $K$ of a double vector bundle~\eqref{eq.dvb2} (see~\cite[Ch.~9]{mackenzie:book2005}) is a vector bundle over the  base $M$ with fiber
coordinates   $z^{\mu}$ and the
transformation law
\begin{equation*}
   z^{\mu}=z^{\mu'}T_{\mu'}{}^{\mu}(x')
\end{equation*}
obtained from~\eqref{eq.lawforz} by setting $u^i$ and $w^{\a}$ to
zero.

When we dualize the vertical bundle $D\to A$ in~\eqref{eq.dvb},
we obtain the bundle $D^{*A}\to A$ with fiber coordinates $w_{\a},
z_{\mu}$ (with lower indices) so that the form
$w^{a}w_{\a}+z^{\mu}z_{\mu}$ giving the pairing in coordinates  is
invariant. We arrive at the following transformation laws:
\begin{align}
    w_{\a'}&=T_{\a'}^{\a} w_{\a}+u^{i'}T_{\a' i'}^{\mu}z_{\mu}\,,\\
    z_{\mu'}&=T_{\mu'}^{\mu}z_{\mu}\,,
\end{align}
where $(T_{\mu'}^{\mu})$ and $(T_{\mu}^{\mu'})$ are  reciprocal
matrices, and the transformation of $u^i$  remains as
in~\eqref{eq.lawforu}. This explains the double vector bundle
structure of~\eqref{eq.dualda}, in particular the vector bundle
$D^{*A}\to K^*$. (Note that $z_{\mu}$ can be identified with fiber
coordinates for $K^*\to M$.) The core of this new double vector bundle is $B^*\to M$.

The same holds when we dualize over $B$. The total space of the
vector bundle $D^{*B}\to B$ has coordinates $x^a, u_i, w^{\a},
z_{\mu}$, the coordinates $(u_i, z_{\mu})$ being dual to $(u^i,
z^{\mu})$ on $D$. Hence the transformation law
\begin{align}
    u_{i'}&=T_{i'}^{i} u_{i}+w^{\a'}T_{\a' i'}^{\mu}z_{\mu}\,,\\
    z_{\mu'}&=T_{\mu'}^{\mu}z_{\mu}\,,
\end{align}
from which we immediately obtain the double vector bundle structure
of~\eqref{eq.dualdb}. The new core is $A^*\to M$.

Treating $D^{*A}$ and $D^{*B}$ as vector
bundles over $K^*$, with fiber coordinates $(u^{i}, w_{\a})$ and
$(u_{i}, w^{\a})$, respectively, we arrive at a surprising natural
duality between them discovered in~\cite{mackenzie:sympldouble} (see
also~\cite{mackenzie:duality} and
\cite[$\S$9.2]{mackenzie:book2005}), with the pairing   given by the bilinear
form
\begin{equation}\label{eq.pairing}
     u^iu_i-w^{\a}w_{\a}\,.
\end{equation}
Here the minus sign between the two terms is absolutely essential for the invariance;
due to it the terms with $z_{\mu}$ appearing in the change of
coordinates cancel each other.

{
\begin{remark}
Unlike the minus sign between the two terms,  a choice of a common sign in  formula~\eqref{eq.pairing} cannot be fixed by  invariance considerations. The form  with the opposite sign $-u^iu_i+w^{\a}w_{\a}$ defines an equally good pairing of vector bundles. Therefore the `Mackenzie duality' between the  bundles $D^{*A}\to K^*$ and $D^{*B}\to K^*$ is canonical  up to a sign. This fact does not really affect anything in our constructions.
\end{remark}
}

Now, the Lie algebroid structures  on  the vector bundles $D^{*A}\to K^*$ and $D^{*B}\to
K^*$  follow as a consequence of the two non-trivial  facts:  1) the fiberwise linearity over
the base $K^*$ of the Poisson brackets  on the total spaces  $D^{*A}$ and
$D^{*B}$ induced by the Lie algebroid structures on the vector bundles $D\to A$ and $D\to B$,
respectively;  and  2) the above duality between the vector bundles
$D^{*A}\to K^*$ and $D^{*B}\to K^*$. By this duality, the  linear Poisson
structure on each bundle  $D^{*A}\to K^*$ and $D^{*B}\to K^*$ induces  a  Lie algebroid structure   on  the bundle  $D^{*B}\to K^*$  and $D^{*A}\to K^*$, respectively {(uniquely  up to a sign).}

This can be readily expressed in our language.

In order to obtain the homological vector fields specifying the Lie algebroid structures for $D^{*B}\to K^*$  and $D^{*A}\to K^*$,
one can   follow these steps: starting from the homological vector field  $Q_{DA}$ on $\Pi_AD$ construct the Lie-Poisson bracket on $D^{*A}$; use the linearity of this bracket over $K^*$ to obtain the Lie algebroid structure for the dual bundle $(D^{*A})^{*{K^*}}\to K^*$; apply the duality between $D^{*A}\to K^*$ and $D^{*B}\to K^*$ to re-write that as a Lie algebroid structure on $D^{*B}\to K^*$, and finally obtain the corresponding homological vector field on $\Pi_{K^*}D^{*B}$, which we denote $Q_{DA}^*$.  (Similarly, with $A$ and $B$ interchanged.) Alternatively, the following  shortcut   argument may be used. Since the homological vector field
$Q_{DA}$  on $\Pi_AD$ defining the vertical Lie algebroid structures of the original double vector bundle has weight $0$ in the horizontal direction, it generates fiberwise linear transformations  and therefore  allows the `transpose' in that direction. More precisely, one has to take the contragredient transformation (the adjoint of the inverse) for the corresponding linear flow and the generator of the new flow is the desired (automatically homological) vector field $Q_{DA}^*$ on   $\Pi_{K^*} D^{*B}$.

In coordinates $x^a, \x_i, \h^{\a}, z_{\mu}$ on $\Pi_{K^*} D^{*B}$, we obtain  the following expression for the homological vector field $Q_{DA}^*$:
\begin{multline}\label{eq.qonpikstardstara}
    Q_{DA}^*=\h^{\a}Q_{\a}^a\der{}{x^a} + \left(\x_j\h^{\a}Q_{i\a}^{j}-
    \frac{1}{2}\h^{\a}\h^{\b}z_{{\lam}}Q_{i \b\a}^{\lambda}\right)\der{}{\x_{i}} \\
   + \frac{1}{2}\h^{\a}\h^{\b}Q_{\b\a}^{\g}\der{}{\h^{\g}}
   + \left(\x_jQ_{\mu}^{j}-\h^{\a}z_{{\lam}}Q_{\mu\a}^{\lambda}\right)\der{}{z_{\mu}}\,.
\end{multline}
It defines the   Lie algebroid structure on $D^{*B}\to K^*\,$.

In the same way we calculate the homological vector field $Q_{DB}^*$ on $\Pi_{K^*} D^{*A}$ giving the  Lie algebroid structure on $D^{*A}\to K^*\,$. In coordinates $x^a, \x^i, \h_{\a}, z_{\mu}\,$, we have
\begin{multline}\label{eq.qonpikstardstarb}
    Q_{DB}^*=\x^{i}Q_{i}^a\der{}{x^a} +   \frac{1}{2}\x^{i}\x^{j}Q_{ji}^{k}\der{}{\x^{k}} \\
   +  \left(-\x^i\h_{\b}Q_{\a i}^{\b}- \frac{1}{2}\x^{i}\x^{j}z_{{\lam}}Q_{\a ji}^{\lambda}\right)\der{}{\h_{\a}}
   + \left(\h_{\b}Q_{\mu}^{\b}-\x^{i}z_{{\lam}}Q_{\mu i}^{\lambda}\right)\der{}{z_{\mu}}\,.
\end{multline}

One can immediately see that the vector field $Q_{DA}^*$ on $\Pi_{K^*} D^{*B}$  and $Q_{BM}$ on $\Pi B$ are related  by the
projection. Hence the horizontal
arrows in~\eqref{eq.dualdb} give a Lie algebroid morphism. The same is true for  the vector fields $Q_{DB}^*$ and  $Q_{AM}$, and for the vertical arrows in~\eqref{eq.dualda}.
\textit{We see that the first part of Condition {\em III} holds automatically!}

Let us examine the second part of Condition \textrm{III},   that \textbf{the dual
bundles $D^{*A}\to K^*$ and $D^{*B}\to K^*$  with the described Lie algebroid structures   form a Lie bialgebroid}.
\emph{It turns out to be the main condition.}

In our language it goes as follows. If we have two vector bundles in duality, a Lie algebroid structure on one of them induces the Lie-Schouten bracket on the total space of the parity-reversed other bundle, as we have recalled in \S\ref{sec.algebroids}. If both bundles are endowed with   Lie algebroid structures, the   condition that  they make a  Lie bialgebroid\footnote{More precisely,   two Lie bialgebroids in duality.} is equivalent to that for each parity-reversed  bundle the homological vector field is a derivation of the Schouten bracket (see~\cite{yvette:exact}, \cite[$\S$12.1]{mackenzie:book2005},  also~\cite{roytenberg:thesis} and \cite{tv:graded}). The condition is  self-dual and it suffices to check just one bundle. In our case, there are the homological vector fields, $Q_{DA}^*$    and $Q_{DB}^*$,  on $\Pi_{K^*} D^{*B}$ and $\Pi_{K^*} D^{*A}$ respectively,  and there are  the Schouten brackets induced on each bundle by duality.

Consider the bundle $\Pi_{K^*} D^{*B}$ for concreteness.
On $\Pi_{K^*} D^{*B}$ we obtain the following explicit expression for the Schouten
brackets between the coordinates:
\begin{equation}
\begin{gathered}\label{eq.schoutennnn}
    \{x^a,x^b\}=\{x^a,z_{\mu}\}=\{z_{\mu},z_{\nu}\}=0\,,\\
    \{\x_i, x^a\}=Q_i^a\,,\quad
    \{\x_i,z_{\mu}\}= -Q_{\mu i}^{{\lam}}z_{{\lam}}\,,\\
    \{\h^{\a},x^a\}=0\,, \quad
    \{\h^{\a},z_{\mu}\}=-Q_{\mu}^{\a}\,, \\
     \{\x_i,\x_j\}=Q_{ij}^k \x_k+Q_{\a ij}^{{\lam}}z_{{\lam}}\h^{\a}\,,\quad
    \{\x_i,\h^{\a}\}=Q_{\b i}^{\a}\h^{\b}\,,\\
    \{\h^{\a}, \h^{\b}\}=0\,.
\end{gathered}
\end{equation}
(In the same way we can find explicitly the Schouten bracket induced on $\Pi_{K^*} D^{*A}$\,, which we skip.)

The second part of Mackenzie's Condition \textrm{III} amounts therefore to the condition
that the vector field~\eqref{eq.qonpikstardstara}  is a derivation of the odd bracket~\eqref{eq.schoutennnn}.

{
\begin{remark} Since the duality between the vector bundles $D^{*A}\to K^*$ and $D^{*B}\to K^*$  is defined up to a sign, the Lie algebroid structures for them are also defined up to common signs, and so are the homological vector field~\eqref{eq.qonpikstardstara} and the Schouten bracket~\eqref{eq.schoutennnn}. Obviously, the  compatibility condition is  independent of these choices.
\end{remark}

By a direct calculation, the derivation property for the field~\eqref{eq.qonpikstardstara} w.r.t. the bracket~\eqref{eq.schoutennnn} expands    to the following system of nine equations:}
\begin{equation}\label{eq.bialg1}
    Q_{\a}^a\,Q_{\mu}^{\a}-Q_{\mu}^{i}\,Q_{i}^a=0
\end{equation}
\begin{equation}\label{eq.bialg2}
    -Q_{\mu}^{i}\,Q_{\nu i}^{{\lam}}+Q_{\mu}^{\a}\,Q_{\nu\a}^{{\lam}}-
    Q_{\nu}^i\,Q_{\mu i}^{{\lam}}+Q_{\nu}^{\a}\,Q_{\mu\a}^{{\lam}}=0
\end{equation}
\begin{equation}\label{eq.bialg3}
    Q_{\a
    j}^{\b}\,Q_{\b}^a+Q_j^b\,\p_bQ_{\a}^a-Q_{j\a}^i\,Q_i^a= Q_{\a}^b\,\p_bQ_j^a
\end{equation}
\begin{equation}\label{eq.bialg4}
     Q_{j}^{a}\,\p_aQ_{\mu}^i+Q_{\mu}^k\,Q_{jk}^i-Q_{\mu}^{\a}\,Q_{j\a}^i
    =-Q_{\mu j}^{{\lam}}\,Q_{{\lam}}^i
\end{equation}
\begin{multline}\label{eq.bialg5}
    Q_{\mu}^i\,Q_{\b ji}^{{\lam}}-Q_{\b j}^{\a}\,Q_{\mu\a}^{{\lam}}+Q_{\nu
    j}^{{\lam}}\,Q_{\mu \b}^{\nu}-Q_j^a\,\p_aQ_{\mu\b}^{{\lam}}+
    Q_{\mu i}^{{\lam}}\,Q_{j\b}^i-Q_{\mu}^{\a}\,Q_{j\b\a}^{{\lam}}
    \\=
    -Q_{\b}^a\,\p_aQ_{\mu j}^{{\lam}}+Q_{\mu j}^{\nu}\,Q_{\nu\b}^{{\lam}}
\end{multline}
\begin{equation}\label{eq.bialg6}
    -Q_{\mu}^i\,Q_{\a
    i}^{\g}+Q_{{\lam}}^{\g}\,Q_{\mu\a}^{{\lam}}-Q_{\mu}^{\b}\,Q_{\b\a}^{\g}=
    -Q_{\a}^a\,\p_aQ_{\mu}^{\g}
\end{equation}
\begin{multline}\label{eq.bialg7}
     Q_{\a j}^{\b}\,Q_{i\b}^k+Q_{j
    l}^k\,Q_{i\a}^l+Q_{j}^a\,\p_aQ_{i\a}^k-
    Q_{\a i}^{\b}\,Q_{j\b}^k-Q_{i
    l}^k\,Q_{j\a}^l-Q_{i}^a\,\p_aQ_{j\a}^k\\
    =
    Q_{\a}^a\,\p_aQ_{ij}^k-Q_{ij}^l\,Q_{l\a}^k-Q_{\a
    ij}^{\mu}\,Q_{\mu}^k
\end{multline}
\begin{multline}\label{eq.bialg8}
    Q_{\a ik}^{\lam}Q_{j\b}^k-Q_{\b ik}^{\lam}Q_{j\a}^k
- Q_{\a i}^{\g}Q_{j\b\g}^{\lam}
+Q_{\b i}^{\g}Q_{j\a\g}^{\lam}
+Q_{\mu i}^{\lam} Q_{j \b\a}^{\mu}- Q_i^a\p_aQ_{j\b\a}^{\lam}\\
-Q_{\a jk}^{\lam}Q_{i\b}^k+Q_{\b jk}^{\lam}Q_{i\a}^k
+ Q_{\a j}^{\g}Q_{i\b\g}^{\lam}
-Q_{\b j}^{\g}Q_{i\a\g}^{\lam}
-Q_{\mu j}^{\lam} Q_{i \b\a}^{\mu}+ Q_j^a\p_aQ_{i\b\a}^{\lam}
\\
= -Q_{ij}^kQ_{k\b\a}^{\lam} - Q_{\mu\a}^{\lam}Q_{\b ij}^{\mu}+ Q_{\mu \b}^{\lam}Q_{\a ij}^{\mu}\\+Q_{\a}^a\p_aQ_{\b ij}^{\lam}- Q_{\b}^a\p_aQ_{\a ij}^{\lam} + Q_{\g ij}^{\lam}Q_{\b\a}^{\g}\,,
\end{multline}
\begin{multline}\label{eq.bialg9}
    Q_{\b k}^{\g}\,Q_{j\a}^k
    -Q_{\a k}^{\g}\,Q_{j\b}^k
    +Q_{j\b\a}^{{\lam}}\,Q_{{\lam}}^{\g}
    -
    Q_{\a j}^{\e}\,Q_{\b\e}^{\g}
    +Q_{\b
    j}^{\e}\,Q_{\a\e}^{\g}-Q_j^a\,\p_aQ_{\b\a}^{\g}\\
    =
    -Q_{\b\a}^{\e}\,Q_{\e j}^{\g}-
    Q_{\a}^a\,\p_aQ_{\b j}^{\g}+Q_{\b}^a\,\p_aQ_{\a j}^{\g}
\end{multline}
Therefore, \textit{Condition {\em III} of Definition~\ref{def.dlie} is analytically expressed by the system of
equations~\eqref{eq.bialg1}--\eqref{eq.bialg9},  bilinear in the components of the original homological vector fields $Q_{DA}$ and $Q_{DB}$.}

Notice that equations~\eqref{eq.bialg1},  \eqref{eq.bialg3}, \eqref{eq.bialg4}, \eqref{eq.bialg6}, \eqref{eq.bialg7} and \eqref{eq.bialg9} in this system are already familiar. They are equivalent to equations~\eqref{eq.anchor1}, \eqref{eq.anchor2}, \eqref{eq.anchor6}, \eqref{eq.anchor4}, \eqref{eq.anchor5} and \eqref{eq.anchor3}, respectively, which together make an analytic expression of Condition \textrm{II}.  \textit{Hence,  Condition {\em III}  contains Condition {\em II}\,\footnote{After  this fact was  first discovered in~\cite{tv:mack}, Mackenzie    gave for it a different   proof within his original framework~\cite{mackenzie:double06}. See also~\cite{mackenzie:double11}.}}.  Since Condition \textrm{I} of the definition of a double Lie algebroid is encoded in the forms of the homological vector fields specifying the horizontal and vertical Lie algebroid structures, Condition \textrm{II} is subsumed by Condition \textrm{III}, and Condition \textrm{III} is expressed by the system of equations~\eqref{eq.bialg1}--\eqref{eq.bialg9} for the components of
these    fields, we conclude that  \textbf{Mackenzie's definition of a double Lie
algebroid reduces to this system of
equations~\eqref{eq.bialg1}--\eqref{eq.bialg9}}. This is the outcome of our analysis.

Note that this system is symmetric w.r.t.   swapping of $A$ and $B$ as it should. (In more detail, equations~\eqref{eq.bialg4} and \eqref{eq.bialg6} are exchanged under the transposition of $A$ and $B$,  and so are equations~\eqref{eq.bialg7} and \eqref{eq.bialg9}; each of the remaining five equations is symmetric itself.)

\smallskip
To finish the proof of Theorem~\ref{thm.main} it remains to compare
the system~\eqref{eq.bialg1}--\eqref{eq.bialg9} with the
commutativity condition for  the vector fields $Q_1=Q_{DB}^{\Pi}$ and $Q_2=Q_{DA}^{\Pi}$ on
$\Pi^2 D$ obtained from  $Q_{DB}$ and $Q_{DA}$ by partial  parity reversions:
\begin{multline}\label{eq.q1}
    Q_1=\x^iQ_i^a\der{}{x^a}+\frac{1}{2}\x^i\x^jQ_{ji}^k\der{}{\x^k} 
    +\left(\x^i\h^{\a}Q_{\a i}^{\b}+t^{\mu}Q_{\mu}^{\b}\right)\der{}{\h^{\b}}\ + \\
    \left(\frac{1}{2}\x^i\x^j\h^{\a}Q_{\a ji}^{\lambda}+\x^it^{\mu}Q_{\mu i}^{\lambda}\right)
    \der{}{t^{\lambda}}\,,
\end{multline}
and
\begin{multline}\label{eq.q2}
    Q_2=\h^{\a}Q_{\a}^a\der{}{x^a}+
    \left(\h^{\a}\x^iQ_{i\a}^{j}-t^{\mu}Q_{\mu}^{j}\right)\der{}{\x^{j}} 
    + \frac{1}{2}\h^{\a}\h^{\b}Q_{\b\a}^{\g}\der{}{\h^{\g}}\ + \\
    \left(-\frac{1}{2}\h^{\a}\h^{\b}\x^{i}Q_{i \b\a}^{\lambda}+\h^{\a}t^{\mu}Q_{\mu\a}^{\lambda}\right)
    \der{}{t^{\lambda}}\,.
\end{multline}
These formulas are written in the coordinates $x^a,\x^i,\h^{\a},t^{\mu}$ on $\Pi^2D$, see the discussion in \S\ref{sec.parreversion}. {Note that we use   the isomorphism $I_{12}$ (as defined in the proof of Proposition~\ref{prop.pp}) for the identification of $\Pi_A\Pi_BD$ and $\Pi_B\Pi_AD$.}
(Eqs.~\eqref{eq.q1} and \eqref{eq.q2} are similar to~\eqref{eq.fieldqa} and \eqref{eq.fieldqb},  with modified signs, see the remark after Eq.~\eqref{eq.qbm}. Note also that throughout this section we have worked  with a purely even $D$, for simplicity, so some particular signs have not shown up. All the calculations carry over to the general case of course.)

The calculation of the commutator of $Q_1$ and $Q_2$ is straightforward and shows that the
commutativity relation
$$[Q_1,Q_2]=0$$
expands to a system of equations that precisely coincides
with~\eqref{eq.bialg1}--\eqref{eq.bialg9}.

Hence we conclude that \emph{Mackenzie's Definition~\ref{def.dlie}
is equivalent to the commutativity of the homological fields $Q_1$
and $Q_2$}, \textsc{QUOD ERAT DEMONSTRANDUM}.

\section{The big picture}\label{sec.big}

After presenting a `computational' proof of our main statement, we
shall now give a conceptual explanation. The argument in this
section provides an alternative proof of Theorem~\ref{thm.main},
almost without calculations.

Let us again consider a double vector bundle
\begin{equation} \label{eq.dv1}
    \begin{CD} D@>>> B\\
                @VVV  @VVV \\
                A@>>>M
    \end{CD}
\end{equation}
We shall assume that all sides of it have Lie algebroid structures
and that each of these structures is compatible with the linear
structure  in the other direction. Thus we assume the obvious part
of the definition of a double Lie algebroid (i.e., Condition \textrm{I}). As
we have seen, this is equivalent to saying that the Lie algebroid
structures on $D\to A$ and $D\to B$ are defined by homological
vector fields of weights $(0,1)$ and $(1,0)$ on the ultimate total
spaces of
\begin{equation} \label{eq.dv2and3}
    \begin{CD} \Pi_AD@>>> \Pi B\\
                @VVV  @VVV \\
                A@>>>M
    \end{CD}
\text{\quad and \quad}
    \begin{CD} \Pi_BD@>>>  B\\
                @VVV  @VVV \\
                \Pi A@>>>M
    \end{CD}
\end{equation}
respectively.

We now show how a compatibility condition for these two Lie algebroid
structures can be introduced.

Let us return for a moment to ordinary Lie algebroids (or just Lie
algebras). Suppose $E\to M$ is a vector bundle. It has three
\textit{neighbors}: the dual bundle $E^*$, the opposite bundle $\Pi
E$ and the antidual $\Pi E^*$. A Lie algebroid structure in $E$
(which is a structure on the module of sections) is equivalently
expressed by each of the following structures on its neighbors: a
homological vector field of weight $1$ on $\Pi E$, a linear Poisson
bracket on $E^*$, and a linear Schouten bracket on $\Pi E^*$. The
axioms of a Lie algebroid are contained in the equation $Q^2=0$ or
in the Jacobi identities for the  Poisson or Schouten bracket. The
structures on $E^*$,  $\Pi E$ and  $\Pi E^*$ are   structures on the
total spaces{\,}\footnote{That is, on the algebras of functions as
opposed to a structure on  sections of a vector bundle or on
elements of a vector space.}.

Acting in a similar way, let us consider all the neighbors of our
double vector bundle~\eqref{eq.dv1}. There are four operations that
can be applied:  vertical dual,  horizontal dual, vertical reversion
of parity, and horizontal reversion of parity.
Besides~\eqref{eq.dv2and3} one   obtains the following double vector
bundles,  which are the neighbors of~\eqref{eq.dv1}.

The complete parity reversion of~\eqref{eq.dv1}:
\begin{equation} \label{eq.pikvadratt}
    \begin{CD} \Pi^2D@>>>  \Pi B\\
                @VVV  @VVV \\
                \Pi A@>>>M
    \end{CD}
\end{equation}
\\
The two duals of~\eqref{eq.dv1}:
\begin{equation} \label{eq.dv5and6}
    \begin{CD} D^{*A}@>>>  K^*\\
                @VVV  @VVV \\
                A@>>>M
    \end{CD}
\text{\quad and \quad}
    \begin{CD} D^{*B}@>>>  B\\
                @VVV  @VVV \\
                K^*@>>>M
    \end{CD}
\end{equation}
\\
The parity reversions of each of the duals:
\begin{equation} \label{eq.dv7}
    \begin{CD} \Pi_AD^{*A}@>>>  \Pi K^*\\
                @VVV  @VVV \\
                A@>>>M
    \end{CD} \qquad
    \begin{CD} \Pi_{B}D^{*B}@>>>   B\\
                @VVV  @VVV \\
               \Pi K^*@>>>M
    \end{CD}
\end{equation}
\\
\begin{equation} \label{eq.piduala}
    \begin{CD} \Pi_{K^*}D^{*A}@>>>   K^*\\
                @VVV  @VVV \\
                \Pi A@>>>M
    \end{CD} \qquad
    \begin{CD} \Pi_{K^*}D^{*B}@>>>  \Pi B\\
                @VVV  @VVV \\
                K^*@>>>M
    \end{CD}
\end{equation}
\\
and
\\
\begin{equation} \label{eq.pikvadratduala}
    \begin{CD} \Pi^2 D^{*A}@>>>   \Pi K^*\\
                @VVV  @VVV \\
                \Pi A@>>>M
    \end{CD} \qquad
    \begin{CD} \Pi^2 D^{*B}@>>>   \Pi B\\
                @VVV  @VVV \\
                \Pi K^*@>>>M
    \end{CD}
\end{equation}

This is the full list up to natural isomorphisms. These twelve
objects, including the original double vector bundle~\eqref{eq.dv1},
can be arranged into a four-valent colored  graph (where from each
vertex emanate two edges corresponding to taking the duals and two
edges corresponding to the parity reversions).

\begin{remark} In the multiple case, for an $n$-fold vector bundle, the
number of edges emanating from each vertex is $n+n=2n$.
\end{remark}

The Lie algebroid structures on the sides of ~\eqref{eq.dv1} obeying
the linearity conditions, which  were expressed above in terms of
weights, generate a pair of structures for each of the
neighbors~\eqref{eq.dv2and3}--\eqref{eq.pikvadratduala}, in various
combinations. More symmetrically, we may  say that each pair of
structures for a particular double vector bundle
from~\eqref{eq.dv1}--\eqref{eq.pikvadratduala} is just a
manifestation of one `pre-double structure' --- that is, without a
compatibility condition yet. (All pairs contain the same
information.) One can make a list of such structures. The next step
will be to look for suitable compatibility conditions for each pair.
The philosophy is that one should look for pairs where a
compatibility condition is formulated naturally, and take it as the
definition of compatibility for an equivalent pair where such a
condition does not come about in an obvious way.

In other words: suppose we do not know what  a double Lie algebroid
is; to get  the right notion of compatibility for the Lie algebroids
on the sides of the double vector bundle~\eqref{eq.dv1}, examine its
neighbors.

Our philosophy is that a verifiable compatibility condition lives on
the total space.

Let us start from~\eqref{eq.dv5and6}.  For each of the double vector
bundles in~\eqref{eq.dv5and6}, the ultimate total space is a Lie
algebroid (over the base $K^*$) and simultaneously possesses a
linear Poisson bracket (linear over both bases). That means that the
dual bundle over $K^*$ is also a Lie algebroid, and one may ask
whether they form a Lie bialgebroid. As the analysis of the previous
section shows, this may be considered as the \textbf{Mackenzie
definition} of a double Lie algebroid for~\eqref{eq.dv1}, since it
subsumes all his other conditions\,\footnote{As already noted, after  this fact was discovered~\cite{tv:mack},
Mackenzie  proved it directly in~\cite{mackenzie:double06}, see also~\cite{mackenzie:double11}, which  makes the analysis in this section entirely independent of our  coordinate calculations in Section~\ref{sec.analysis}.}.

Note that for Lie bialgebroids and Lie bialgebras, unlike   Lie
algebras or   Lie algebroids equivalently manifesting themselves on
a total space in terms of a linear Poisson bracket, or a linear
Schouten bracket, or a homological field of weight $1$,  everything
remarkably boils down to just one type of structure: namely, a
$QS$-structure, i.e., a homological vector field and a Schouten
bracket linked by the derivation condition (see~\cite{tv:graded}).

Now look at the neighbors~\eqref{eq.dv1}--\eqref{eq.pikvadratduala}.
Among them \emph{there are precisely five cases where a structure is
induced on the total space}:~\eqref{eq.pikvadratt}, and the four
bundles in~\eqref{eq.piduala}, \eqref{eq.pikvadratduala}.

On the total space  of each of the double vector bundles
in~\eqref{eq.piduala} there is a Schouten bracket of weight
$(-1,-1)$ and a homological vector field of weight $(0,1)$ or
$(1,0)$, respectively. On the total space of the bundles
in~\eqref{eq.pikvadratduala} there is a Poisson bracket  of weight
$(-1,-1)$ and a homological vector field of weight $(0,1)$ or
$(1,0)$. A compatibility condition in each case is the derivation
property of the vector field w.r.t. the bracket.

\begin{proposition} \label{prop.bialg}
The compatibility conditions for the four bundles
in~\eqref{eq.piduala} and \eqref{eq.pikvadratduala} are equivalent,
and are but different ways of saying that $(D^{*A}, D^{*B})$ is a Lie
bialgebroid over $K^*$.
\end{proposition}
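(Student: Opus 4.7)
The plan is to reduce each of the four compatibility conditions to a single Lie-bialgebroid condition and to connect them through functorial properties of partial parity reversion. First I would invoke Kosmann-Schwarzbach's characterization~\cite{yvette:exact}: for two Lie algebroids in duality over a common base, the pair is a Lie bialgebroid if and only if, on the parity-reversed total space, the homological field coming from one Lie algebroid is a derivation of the Schouten bracket induced by the other. Applied to our setting with the Mackenzie-dual pair $D^{*A}, D^{*B}$ over $K^*$, this is exactly the derivation property on $\Pi_{K^*}D^{*A}$ (and, by the manifest self-duality of the $QS$-condition, also on $\Pi_{K^*}D^{*B}$). Hence the two compatibility conditions attached to the bundles in~\eqref{eq.piduala} are equivalent and each is a verbatim statement of $(D^{*A}, D^{*B})$ being a Lie bialgebroid.

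Next I would propagate this equivalence to the two bundles in~\eqref{eq.pikvadratduala}. The key observation is that $\Pi^2D^{*A}$ is obtained from $\Pi_{K^*}D^{*A}$ by the additional parity reversion $\Pi_A$, and likewise for $B$. Partial parity reversion is a covariant functor on bi-graded manifolds (Section~\ref{sec.parreversion}), so the homological field on $\Pi_{K^*}D^{*A}$ of weight $(0,1)$ corresponds to a homological field of weight $(0,1)$ on $\Pi^2D^{*A}$. Because parity reversion in exactly one direction flips the overall parity of a bracket while preserving its bidegree, the odd Schouten bracket of bidegree $(-1,-1)$ on $\Pi_{K^*}D^{*A}$ transforms into an even Poisson bracket of bidegree $(-1,-1)$ on $\Pi^2D^{*A}$, which matches the structure listed there. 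The derivation identity is preserved under this correspondence: concretely, both brackets can be represented by their linear Hamiltonian on a larger cotangent-type manifold, the homological field is the Hamiltonian vector field of another linear Hamiltonian, and the derivation property is an equality of graded brackets of Hamiltonians which is unaffected by $\Pi_A$. Thus the $QP$-compatibility on $\Pi^2D^{*A}$ is equivalent to the $QS$-compatibility on $\Pi_{K^*}D^{*A}$, and by an identical argument with $\Pi_B$ in place of $\Pi_A$ the same equivalence holds between $\Pi^2D^{*B}$ and $\Pi_{K^*}D^{*B}$.

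Assembling these three equivalences (two vertical, via parity reversion, and one horizontal, via duality) yields a commuting square in which all four compatibility conditions are equivalent to the single assertion that $(D^{*A}, D^{*B})$ is a Lie bialgebroid over $K^*$, as required.

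The main technical obstacle I expect is making the parity-reversion step for brackets precise in the bi-graded setting: one must verify that $\Pi_A$ (respectively $\Pi_B$) carries the structure maps defining the Schouten bracket on $\Pi_{K^*}D^{*A}$ into the structure maps defining the Poisson bracket on $\Pi^2D^{*A}$, and does so in a way that intertwines the derivation identity. While for ordinary (single-graded) vector bundles the passage $\Pi E^*\leftrightarrow E^*$ between Schouten and Poisson is classical, here one must check that partial reversion in one of the two weight directions commutes with the induction of brackets from a Lie algebroid structure, up to the expected sign conventions fixed by the choice of $I_{12}$ in Proposition~\ref{prop.pp}. Once this naturality is established, the proof is an essentially diagrammatic chase and requires no additional computation.
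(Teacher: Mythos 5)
Your proposal is correct and follows essentially the same route as the paper: the two bundles in~\eqref{eq.piduala} carry the self-dual $QS$-formulation of the Lie bialgebroid condition (Kosmann-Schwarzbach), and the extra parity reversion in the remaining direction converts this into the equivalent $QP$-formulations on the two bundles in~\eqref{eq.pikvadratduala}. Your added care about how the partial parity reversion intertwines the brackets and the derivation identity merely spells out what the paper's shorter proof takes for granted.
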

\begin{proof} Indeed, for any Lie bialgebroid $(E,E^*)$ the compatibility can
be stated in terms of either $E$ or $E^*$ (as a $QS$-structure on
either $\Pi E$ or $\Pi E^*$, respectively). This corresponds to one
of the bundles in~\eqref{eq.piduala}. In our special situation there
is also an extra option of changing parity in the other direction
(and turning a $QS$-structure into a $QP$-structure), which adds two
equivalent descriptions in terms of the bundles
in~\eqref{eq.pikvadratduala}.
\end{proof}

The remaining case is  the total space of~\eqref{eq.pikvadratt}
where there are two homological vector fields of weights $(0,1)$ and
$(1,0)$. A compatibility condition for them is of course
commutativity.

We see now that there are essentially two conditions to compare: the
Mackenzie bialgebroid condition, which lives on one of the bundles
in~\eqref{eq.piduala}, \eqref{eq.pikvadratduala}, and our
commutativity relation for the bundle~\eqref{eq.pikvadratt}.

\begin{proposition} \label{prop.equiv}
The Mackenzie bialgebroid condition and the commutativity of the two
homological vector fields  for~\eqref{eq.pikvadratt} are equivalent.
\end{proposition}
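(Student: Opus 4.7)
The plan is to translate both the Mackenzie bialgebroid condition and the commutativity relation $[Q_1,Q_2]=0$ into the \emph{same} Poisson-commutativity statement on a common graded symplectic manifold, by means of a Mackenzie--Xu type identification of cotangent bundles. This presents the equivalence as a purely functorial consequence of constructions from Section~\ref{sec.background}, rather than as a coincidence of coordinate expansions.

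First I would Hamiltonian-lift. The homological fields $Q_1, Q_2 \in \Vect(\Pi^2 D)$ of respective bi-weights $(1,0)$ and $(0,1)$ are generated by unique fiberwise-linear Hamiltonians $H_1, H_2$ on $T^*(\Pi^2 D)$, and the correspondence $Q \mapsto H$ is an injective Lie algebra homomorphism on the subspace of weight-homogeneous linear Hamiltonians. Hence $[Q_1, Q_2] = 0$ is equivalent to $\{H_1, H_2\} = 0$. Next I would invoke a double-bundle analog of the classical Mackenzie--Xu isomorphism $T^*(\Pi E) \cong T^*(\Pi E^*)$ used in the Example following Definition~\ref{def.dalie}: for our double vector bundle $D$ with core $K$, this should yield canonical graded-symplectic isomorphisms
\begin{equation*}
T^*(\Pi^2 D) \;\cong\; T^*(\Pi_{K^*} D^{*A}) \;\cong\; T^*(\Pi_{K^*} D^{*B}),
\end{equation*}
under which $H_1$ and $H_2$ correspond to the fiberwise-linear Hamiltonians on $T^*(\Pi_{K^*} D^{*A})$ and $T^*(\Pi_{K^*} D^{*B})$ generating the homological fields $Q_{DB}^*$ and $Q_{DA}^*$ constructed in Section~\ref{sec.analysis}.

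With these identifications in place, $\{H_1, H_2\} = 0$ becomes exactly the Roytenberg-type criterion for $(D^{*A}, D^{*B})$ to be a Lie bialgebroid over $K^*$, viewed as a compatibility between a linear Hamiltonian and the symplectic-dual homological field. By Proposition~\ref{prop.bialg} this is in turn equivalent to any of the four $QP$- or $QS$-compatibility conditions on the bundles in~\eqref{eq.piduala} and~\eqref{eq.pikvadratduala}, and hence to the second (and, as was established in Section~\ref{sec.analysis}, decisive) part of Mackenzie's Condition~\textrm{III}. This chain of equivalences delivers Proposition~\ref{prop.equiv}.

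The principal obstacle will be to establish that the Mackenzie--Xu identification extends to the graded and doubled setting of Section~\ref{sec.parreversion} with correct sign and weight conventions\,---\,in particular, compatibly with the isomorphism $I_{12}$ from Proposition~\ref{prop.pp} used to identify $\Pi_A\Pi_B D$ with $\Pi_B\Pi_A D$, and tracking how bi-weights on $\Pi^2 D$ transport to the cotangent bundles of the various neighbors. Once this canonical symplectic identification is in place, the matching of $H_1, H_2$ with the Hamiltonians of $Q_{DB}^*, Q_{DA}^*$ is forced by bi-weight homogeneity and the duality between Lie-Schouten brackets and homological fields recalled in Section~\ref{sec.algebroids}, so the remainder of the argument is essentially automatic.
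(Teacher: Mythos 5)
Your argument is correct in outline, but it takes a genuinely different route from the paper's. The paper proves Proposition~\ref{prop.equiv} without any cotangent lifts: commutativity of $Q_1$ and $Q_2$ says that the (fiberwise-linear) flow of one field preserves the other; the derivation property says that the flow of the corresponding field on $\Pi_{K^*}D^{*A}$ preserves the Schouten bracket; and these are the same statement by functoriality, since a fiberwise-linear transformation preserves a Lie algebroid structure if and only if the adjoint, anti-adjoint, or `$\Pi$-symmetric' maps preserve the equivalent linear Poisson, Schouten, or homological-field manifestations on the neighbors. You instead lift $Q_1,Q_2$ to linear Hamiltonians on $T^*(\Pi^2 D)$ and transport them through a graded, double-vector-bundle version of the Mackenzie--Xu symplectomorphism to $T^*(\Pi_{K^*}D^{*A})=T^*(\Pi_{K^*}D^{*B})$, where $\{H_1,H_2\}=0$ becomes Roytenberg's criterion for $(D^{*A},D^{*B})$ to be a Lie bialgebroid over $K^*$. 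This is coherent, and it has the advantage of placing everything on one symplectic manifold and tying Proposition~\ref{prop.equiv} directly to the treatment of doubles in Section~\ref{sec.appl}. Be aware, though, that essentially all the content sits in the step you defer: constructing the identification $T^*(\Pi^2 D)\cong T^*(\Pi_{K^*}D^{*A})$ with correct weights and signs (compatibly with $I_{12}$), and verifying that under it the Hamiltonian $H_2$, which is \emph{linear} for the fibration associated with $\Pi_{K^*}D^{*B}$, becomes the \emph{quadratic} Hamiltonian generating the Schouten bracket on $\Pi_{K^*}D^{*A}$ --- your phrase ``fiberwise-linear Hamiltonians generating $Q_{DB}^*$ and $Q_{DA}^*$'' is accurate only if the two Hamiltonians are read with respect to the two different fibrations of the common cotangent space. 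That deferred identification is at least as much work as the functoriality principle the paper invokes, so the two proofs end up comparable in rigour, with yours buying a closer connection to the Drinfeld-double formalism at the price of heavier symplectic bookkeeping.
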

\begin{proof} Consider one of the manifestations of the bialgebroid
condition, say, for concreteness, on the first double vector bundle
in~\eqref{eq.piduala}. The derivation property means that the flow
of the vector field preserves the Schouten bracket. On the other
hand, the commutativity condition for~\eqref{eq.pikvadratt} means
that the flow of one field preserves the other. Now the claim
follows by functoriality:  a linear transformation preserves a Lie
bracket if and only if the adjoint or anti-adjoint map preserves the
corresponding linear Poisson or Schouten bracket and if and only if
the `$\Pi$-symmetric' map preserves the corresponding homological
vector field.
\end{proof}

Propositions~\ref{prop.bialg} and~\ref{prop.equiv} together imply
Theorem~\ref{thm.main}.

\section{Applications and generalizations}\label{sec.appl}

In this section we revise `Drinfeld doubles' for Lie bialgebroids,
introduce systematically multiple Lie algebroids already touched
upon in the previous sections, and discuss how Lie bialgebroid
theory may be  extended to the multiple case.

\subsection{Doubles of Lie bialgebroids}

We already mentioned in the introduction that the problem of a
Drinfeld double of a Lie bialgebroid was one of the motivations and
a testing case for Mackenzie's definition of double Lie algebroids.

Recall that Drinfeld's \textit{classical double} of a Lie bialgebra
is again a Lie bialgebra with ``good'' properties. An analog of this
construction for Lie bialgebroids turned out to be a puzzle.  Three
constructions of a `double' have been suggested. Suppose $(E,E^*)$
is a Lie bialgebroid over a base $M$. Liu, Weinstein and
Xu~\cite{weinstein:liuxu} suggested to consider as its double a
structure of a Courant algebroid on the direct sum $E\oplus E^*$.
Mackenzie in~\cite{mackenzie:doublealg}, \cite{mackenzie:drinfeld},
\cite{mackenzie:notions} and Roytenberg in~\cite{roytenberg:thesis}
suggested two different constructions based on the cotangent bundles
$T^*E$ and $T^*\Pi E$, respectively.

Roytenberg showed~\cite{roytenberg:thesis} that the
Liu--Weinstein--Xu double can be recovered from his own construction
using derived brackets, by generalizing the results of
C.~Roger~\cite{roger:1991} and
Y.~Kosmann-Schwarzbach~\cite{yvette:jacobian}, \cite{yvette:derived}
for Lie bialgebras.  Thus only the constructions
of~\cite{roytenberg:thesis} and~\cite{mackenzie:doublealg},
\cite{mackenzie:drinfeld}  have to be compared.

Though approaches of~\cite{roytenberg:thesis}
and~\cite{mackenzie:doublealg}, \cite{mackenzie:drinfeld} look very
different, we shall  now establish their equivalence.

Both Roytenberg's and Mackenzie's construction use the statement
that the cotangent bundles of dual vector bundles are
isomorphic~\cite{mackenzie:bialg}. There is a double vector bundle
\begin{equation} \label{eq.mackdouble}
    \begin{CD} T^*E=T^*E^*@>>>   E^*\\
                @VVV  @VVV \\
               E@>>>M
    \end{CD}
\end{equation}
Mackenzie~\cite{mackenzie:doublealg} shows that it is a double Lie
algebroid, which he calls the `cotangent double' of a Lie
bialgebroid $(E,E^*)$. He uses his original definition of double Lie
algebroids and we do not need to elaborate his argument here.

On the other hand, Roytenberg~\cite{roytenberg:thesis}  considers
the diagram
\begin{equation} \label{eq.roytdouble}
    \begin{CD} T^*\Pi E=T^*\Pi E^*@>>>   \Pi E^*\\
                @VVV  @VVV \\
               \Pi E@>>>M
    \end{CD}
\end{equation}
His further construction is as follows. Suppose $Q_E\in\Vect (\Pi
E)$ and $Q_{E^*}\in\Vect (\Pi E^*)$ are the homological vector
fields defining the Lie algebroid structures on $E\to M$ and $E^*\to
M$, respectively. Assign to them the fiberwise linear functions
$H_E$ and $H_{E^*}$ on the cotangent bundles $T^*\Pi E$ and $T^*\Pi
E^*$, respectively. One proves~\cite{roytenberg:thesis} that under
the natural symplectomorphism $T^*\Pi E \to T^*\Pi E^*$ the linear
function $H_{E^*}$ on $T^*\Pi E^*$ corresponding to the vector field
$Q_{E^*}$ is transformed into the fiberwise quadratic function $S_E$
on $T^*\Pi E$ specifying the Schouten bracket on $\Pi E$ induced by
the Lie structure on $E^*$. Therefore the derivation property of
$Q_E$ w.r.t. the Schouten bracket on $\Pi E$, which  is the most
convenient definition  of a Lie bialgebroid~\cite{yvette:exact} is
equivalent to the commutativity of the Hamiltonians $H_E$ and
$H_{E^*}$ under the canonical Poisson bracket. They generate
commuting homological vector fields $X_{H_E}$ and $X_{H_{E^*}}$ on
the cotangent bundle $T^*\Pi E$. It was suggested
in~\cite{roytenberg:thesis} to consider the sum
$Q=X_{H_E}+X_{H_{E^*}}$, which is a homological field of total
weight $+1$ on the graded manifold $T^*\Pi E$ as the desired
`double'.

If we bear in mind that $T^*\Pi E$ is a double vector bundle and
check that $X_{H_E}$ and $X_{H_{E^*}}$ have  the right weights
$(1,0)$ and $(0,1)$, we can conclude that this construction leads
exactly to a double Lie antialgebroid. To compare it with
Mackenzie's construction we need to apply Theorem~\ref{thm.main}.

Notice that $\Pi^2 T^*E$ coincides with $T^* \Pi E$. Hence the
double vector bundle
\begin{equation} \label{eq.mackdoublepikv}
    \begin{CD} \Pi^2T^*E=\Pi^2T^*E^*@>>>   \Pi E^*\\
                @VVV  @VVV \\
               \Pi E@>>>M
    \end{CD}
\end{equation}
obtained by the complete parity reversion of the double vector
bundle~\eqref{eq.mackdouble} is identical
with~\eqref{eq.roytdouble}. It remains to identify the respective
homological vector fields on the total space, which is achieved by a
direct inspection. We arrive at the following statement.
\begin{proposition} Roytenberg's and Mackenzie's pictures give the same notion of a double of a Lie
bialgebroid.
\end{proposition}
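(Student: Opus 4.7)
The plan is to reduce the comparison to a single application of Theorem~\ref{thm.main}. The first observation is that the underlying double vector bundles already coincide: the complete parity reversion $\Pi^2 T^*E$ of Mackenzie's cotangent double~\eqref{eq.mackdouble} equals $T^*\Pi E = T^*\Pi E^*$ (using the Mackenzie--Xu isomorphism in the top corner), which is exactly the ambient manifold of Roytenberg's construction~\eqref{eq.roytdouble}. By Theorem~\ref{thm.main} applied to Mackenzie's cotangent double, his double Lie algebroid structure is equivalent to a pair of commuting homological vector fields $(Q_1, Q_2)$ of weights $(1,0)$ and $(0,1)$ on $T^*\Pi E$. The whole question thus collapses to the identifications $Q_1 = X_{H_E}$ and $Q_2 = X_{H_{E^*}}$.

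To carry these out, I would proceed as follows. In the cotangent double, the two Lie algebroid structures on $T^*E \to E$ and $T^*E \to E^*$ are the cotangent Lie algebroids of the Poisson manifolds $E$ and $E^*$, whose linear Poisson structures are induced by the Lie algebroid structures on $E^*$ and $E$ respectively (see~\cite[Ch.~10]{mackenzie:book2005}). Feeding these into the recipe of Section~\ref{sec.analysis} (build $Q_{DA}$, $Q_{DB}$, apply partial parity reversion, then $I_{12}$ from Proposition~\ref{prop.pp}) produces $Q_1$ and $Q_2$ on $T^*\Pi E$. On the other side, $X_{H_E}$ and $X_{H_{E^*}}$ are by definition the Hamiltonian vector fields on $T^*\Pi E$ of the linear functions $H_E$ and $H_{E^*}$ dual to $Q_E$ and $Q_{E^*}$. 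The key bridge is the fact recalled just before the proposition: under the symplectomorphism $T^*\Pi E \to T^*\Pi E^*$ the linear Hamiltonian $H_{E^*}$ transforms into the fiberwise quadratic Hamiltonian $S_E$ on $T^*\Pi E$ encoding the Schouten bracket on $\Pi E$ induced by the Lie structure on $E^*$ --- and this is precisely the Schouten bracket underlying the cotangent Lie algebroid $T^*E \to E$. Thus $Q_2$, built from this Lie algebroid, coincides with $X_{H_{E^*}}$, and the symmetric argument (with $E$ and $E^*$ interchanged) identifies $Q_1$ with $X_{H_E}$. At this point one verifies in coordinates that the two expressions really do agree; this is the ``direct inspection'' referred to in the excerpt.

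With the identifications in hand the proposition follows at once: Roytenberg's bialgebroid condition $\{H_E, H_{E^*}\} = 0$, equivalently $[X_{H_E}, X_{H_{E^*}}] = 0$, is exactly the commutativity condition of Theorem~\ref{thm.main} for $(Q_1, Q_2)$, which is Mackenzie's Condition~III for the cotangent double. The main obstacle, and the only real content, is sign and convention bookkeeping: one must simultaneously reconcile the sign in the Mackenzie--Xu symplectomorphism, the sign built into the canonical identification $I_{12}\co \Pi_B\Pi_A D \to \Pi_A\Pi_B D$ of Proposition~\ref{prop.pp}, and the sign in passing from a Hamiltonian to its Hamiltonian vector field on the $\ZZ$-bigraded $P$-manifold $T^*\Pi E$ whose Poisson bracket has weight $(-1,-1)$. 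None of these is substantive, but each must be tracked so that the pair $(X_{H_E}, X_{H_{E^*}})$ is genuinely $(Q_1, Q_2)$ and not merely an isomorphic copy.
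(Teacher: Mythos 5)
Your proposal follows essentially the same route as the paper: observe that $\Pi^2 T^*E$ coincides with $T^*\Pi E$ so that the complete parity reversion of Mackenzie's cotangent double is Roytenberg's ambient double vector bundle, check that $X_{H_E}$ and $X_{H_{E^*}}$ have weights $(1,0)$ and $(0,1)$, apply Theorem~\ref{thm.main}, and identify the respective homological vector fields by direct inspection. Your extra detail on how that inspection would proceed (via the cotangent Lie algebroids of the linear Poisson structures and the $H_{E^*}\mapsto S_E$ correspondence, with attention to the sign conventions in $I_{12}$ and the Mackenzie--Xu symplectomorphism) is a sound elaboration of what the paper leaves implicit.
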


We can now identify the two constructions and speak simply of the
\textit{(cotangent) double} of a Lie bialgebroid.  The Proposition shows that the cotangent double is fundamental and should be regarded
as the correct extension of Drinfeld's double to Lie bialgebroids.

\begin{remark}
The canonical symplectic structure on  $T^*\Pi E$ preserved by the
Hamiltonian homological vector fields $X_{H_E}$ and $X_{H_{E^*}}$
corresponds to the invariant scalar product on Drinfeld's double
$\mathfrak d (\mathfrak b)=\mathfrak b\oplus \mathfrak b^*$ of a Lie
bialgebra $\mathfrak b$.
\end{remark}

\subsection{Multiple Lie algebroids}

We have already mentioned that the methods of this paper allow us to consider a natural
concept of multiple Lie algebroids. (A part of the
motivation for doing that comes from the theory of doubles.) We
shall give an outline of this theory.

First we need a language for describing multiple vector bundles. Fix
a natural number $n$. To define $n$-fold vector bundles, consider
vector spaces $V_r$, $V_{r_1r_2}$, $V_{r_1r_2r_3}$, \ldots, of
arbitrary dimensions $d_r$,$d_{r_1r_2}$, $d_{r_1r_2r_3}$, etc.,
numbered by increasing sequences $r_1<\ldots <r_k$, where $0<k\leq
n$ and all $r_i$ run from $1$ to $n$.

\begin{example} When $n=1$, we have just one vector space $V=V_1$. When
$n=2$, we have $V_1$, $V_2$ and $V_{12}$. For $n=3$, we have $7$
spaces: $V_1$, $V_2$, $V_3$, $V_{12}$, $V_{13}$, $V_{23}$, and
$V_{123}$. In general the number of spaces is $2^n-1$.
\end{example}

For convenience of notation let us fix linear coordinates on each of
the spaces, denoting them $v^{i_r}_{(r)}$,
$v^{i_{r_1r_2}}_{(r_1r_2)}$, etc. (Each index such as $i_r$ runs
over its own set of values, of cardinality equal to the dimension of
the respective space.)

\begin{definition} An \textbf{$n$-fold vector bundle} $E$ over a base $M$ is a fiber
bundle   $E\to M$ with the standard fiber
\begin{equation*}
    \prod_{r} V_r \,\times \,\prod_{r_1<r_2} V_{r_1r_2} \,\times \,\ldots
    \,\times\,
    V_{12\ldots n}
\end{equation*}
where the transition functions have the form:
\begin{align*}
    v^{i_r}_{(r)}&=v^{i_r'}_{(r)} T_{i_r'}^{i_r},  \\
    v^{i_{r_1r_2}}_{(r_1r_2)}&=v^{i_{r_1r_2}'}_{(r_1r_2)} T_{i_{r_1r_2}'}^{i_{r_1r_2}}+
    v^{i_{r_1}'}_{(r_1)}v^{i_{r_2}'}_{(r_2)}
    T_{i_{r_2}'i_{r_1}'}^{i_{r_1r_2}},
     \intertext{\quad\quad\quad\quad\quad\quad\quad\dots\dots\dots\dots}
    v^{i_{12\ldots n}}_{(12\ldots n)}&=v^{i_{12\ldots n}'}_{(12\ldots
    n)}T_{i_{12\ldots n}'}^{i_{12\ldots n}}+\ldots + v^{i_1'}_{(1)}\ldots
    v^{i_n'}_{(n)} T_{{i_n'}\ldots{i_1'}}^{i_{12\ldots n}}\,.
\end{align*}
\end{definition}

In other words, the transformation for each of $V_r$ is linear; for
$V_{r_1r_2}$ it is linear plus an extra term bilinear in $V_{r_1}$
and $V_{r_2}$, etc.

\begin{example} For a triple vector bundle $E\to M$ ($n=3$), we have fiber coordinates:
$v^{i_1}_{(1)}$, $v^{i_2}_{(2)}$, $v^{i_3}_{(3)}$,
$v^{i_{12}}_{(12)}$, $v^{i_{13}}_{(13)}$, $v^{i_{23}}_{(23)}$, and
$v^{i_{123}}_{(123)}$. The transformation law is as follows:
\begin{align*}
    v^{i_1}_{(1)}&=v^{i_1'}_{(1)}T_{i_1'}^{i_1}\\
    v^{i_2}_{(2)}&=v^{i_2'}_{(2)}T_{i_2'}^{i_2}\\
    v^{i_3}_{(3)}&=v^{i_3'}_{(3)}T_{i_3'}^{i_3}\\
    v^{i_{12}}_{(12)}&=v^{i_{12}'}_{(12)}T_{i_{12}'}^{i_{12}}+
    v^{i_1'}_{(1)}v^{i_2'}_{(2)}T_{{i_2'}{i_1'}}^{i_{12}}\\
    v^{i_{13}}_{(13)}&=v^{i_{13}'}_{(13)}T_{i_{13}'}^{i_{13}}+
    v^{i_1'}_{(1)}v^{i_3'}_{(3)}T_{{i_3'}{i_1'}}^{i_{13}}\\
    v^{i_{23}}_{(23)}&=v^{i_{23}'}_{(23)}T_{i_{23}'}^{i_{23}}+
    v^{i_2'}_{(2)}v^{i_3'}_{(3)}T_{{i_3'}{i_2'}}^{i_{23}}\\
    v^{i_{123}}_{(123)}&=
                    \begin{aligned}[t]
    v^{i_{123}'}_{(123)}T_{i_{123}'}^{i_{123}}+
    v^{i_1'}_{(1)}v^{i_{23}'}_{(23)}T_{{i_{23}'}{i_1'}}^{i_{123}}+
    v^{i_2'}_{(2)}v^{i_{13}'}_{(13)}T_{{i_{13}'}{i_2'}}^{i_{123}}+
    v^{i_3'}_{(3)}v^{i_{12}'}_{(12)}T_{{i_{12}'}{i_3'}}^{i_{123}} \\+
v^{i_1'}_{(1)}v^{i_2'}_{(2)}v^{i_3'}_{(3)}T_{{i_3'}{i_2'}{i_1'}}^{i_{123}}
\end{aligned}
\end{align*}
\end{example}

\begin{remark} Triple vector bundles\,---\,with the quaternary case
briefly mentioned\,---\,were introduced and studied
in~\cite{mackenzie:duality} from a different viewpoint (not using
local trivializations and transition functions).
Paper~\cite{mackenzie:duality} also contains some `likely
principles' of duality for general multiple case. The existence of a local trivialization was not discussed in~\cite{mackenzie:duality}; under a form of a ``decomposition'' it was explicitly introduced in the definition in~\cite{mackenzie:duality2009}. To avoid any problems and since it is not our task to minimize axiomatic systems, here we define multiple vector bundles as a particular case of locally trivial fiber bundles from the start.
\end{remark}

A multiple vector bundle has \textit{faces}, which are also multiple
vector bundles. A face is obtained by choosing indices $r_1<\ldots <
r_k$; fiber coordinates for it will be the coordinates
$v^{i_{{r_1\ldots r_k}}}_{(r_1\ldots r_k)}$ and all other
coordinates with indices labelled by subsets of $r_1,\ldots, r_k$.
For example, for a triple vector bundle there are faces that are
(ordinary) vector bundles and double vector bundles, corresponding
to the edges and $2$-faces of a $3$-cube. In a natural way  various
partial projections and zero sections are defined.

The total space of a multiple vector bundle is a multi-graded
manifold. More precisely, there are weights $\boldsymbol{w}_r$,
$r=1,\ldots, n$, each of them being a degree  in all coordinates
containing a given label $r$. For example, $\boldsymbol{w}_2$ is the
total degree in $v_{(2)}$, $v_{(12)}$, $v_{(23)}$, \ldots,
$v_{(12\ldots n)}$. We define \textit{total weight} as
$\boldsymbol{w}=\boldsymbol{w}_1+\ldots+\boldsymbol{w}_n$.

Due to the multilinearity of transition functions, for a multiple
vector bundle  the operations of \textit{partial parity reversion}
$\Pi_r$ and   \textit{partial dual} ${D}_r$ in the $r$-th
direction, make sense for each $r=1,\ldots, n$.

\begin{definition} \label{def.multalie}
An \textbf{$n$-fold Lie antialgebroid} $E$ over a base $M$ is an $n$-fold
vector bundle $E\to M$ endowed with $n$ odd vector fields $Q_r$ of
weights $(0,\ldots, 1, \ldots, 0)$ on the total space $E$ such that
\begin{equation*}
    [Q_r,Q_s]=0
\end{equation*}
for all $r,s$. (In particular, these fields are homological.)
\end{definition}

\begin{definition} \label{def.multlie}
An \textbf{$n$-fold Lie algebroid} $E$ over a base $M$  is an $n$-fold
vector bundle $E\to M$  such that the $n$-fold vector bundle $\Pi^n
E\to M$ obtained by the complete parity reversion $\Pi^n=\Pi_n\ldots
\Pi_1$ is an $n$-fold Lie antialgebroid.
\end{definition}

In other words, we take the statement of Theorem~\ref{thm.main} as a
working definition for the multiple case.

Each face of a multiple Lie (anti)algebroid is also a multiple Lie
(anti)algebroid.

We expect that it is possible to define multiple Lie algebroids also
\textit{\`{a}  la} Mackenzie, via duals and bialgebroids, and to
show the equivalence  with Definition~\ref{def.multlie} (i.e., to
prove the analog of Theorem~\ref{thm.main}). This will require an
analysis of the structures induced on the neighbors of a multiple
Lie (anti)algebroid.

\subsection{More on doubles}

Recall that Drinfeld's classical double of a Lie bialgebra is not
just a Lie algebra, but also a coalgebra, and furthermore a Lie
bialgebra again. This gives a direction in which to look in the case
of Lie bialgebroids. Note that this second structure (for doubles of
Lie bialgebroids) has not been discovered previously.

There is a conjectured statement that reads as follows.

\begin{genprinc}
Taking the double of an $n$-fold Lie bialgebroid  gives  an
$(n+1)$-fold Lie bialgebroid, with   additional properties such as a
symplectic structure.
\end{genprinc}

Of course it involves new notions yet to be defined.   Multiple Lie
algebroids were introduced above. As for ``double Lie
\textbf{bi}algebroids'' (or ``\textbf{bi}- double Lie algebroids''),
and, further,  the ``bi-''   multiple case,   this is a subject of
our forthcoming joint paper with Kirill Mackenzie. The example and
discussion  below should be seen just as preliminary hints.

\begin{example} \label{ex.double}
Consider again the double vector bundle given
by~\eqref{eq.mackdouble}. It is a double Lie algebroid. Notice that
the core of it is the cotangent bundle $T^*M\to M$. Take the two
duals of the double vector bundle~\eqref{eq.mackdouble}. We obtain
the double vector bundles
\begin{equation} \label{eq.mackdouble3}
    \begin{CD} TE@>>>   T M\\
                @VVV  @VVV \\
               E@>>>M
    \end{CD}
\end{equation}
as the vertical dual and
\begin{equation} \label{eq.mackdouble4}
    \begin{CD} TE^*@>>>   E^*\\
                @VVV  @VVV \\
               T M@>>>M
    \end{CD}
\end{equation}
as the horizontal dual. Both~\eqref{eq.mackdouble3} and
\eqref{eq.mackdouble4} are known to be double Lie algebroids as
well~\cite{mackenzie:notions}. The three double vector bundles in
duality~\eqref{eq.mackdouble}, ~\eqref{eq.mackdouble3},
~\eqref{eq.mackdouble4}
\begin{equation}
    \begin{picture}(300,80)(0,65)
    \put(90,90){${\begin{CD} {TE}  @>>>   T M\\
                @VVV  @VVV \\
               E@>>>M\end{CD}}$}
    \put(120,110){$\begin{CD} {\hphantom{T^*T^*E}} @.   \!\!T E^*\\
                @.  @VVV \\
               T^*E@>>>\!\!E^*\end{CD}$}
    {\put(132,84){\vector(-2,-1){18}}}  
    {\put(192,84){\vector(-2,-1){18}}}  
    {\put(192,126){\vector(-2,-1){18}}} 
    \end{picture}
\end{equation}
(we use the picture of a corner~\cite{mackenzie:duality} for them),
which are all double Lie algebroids, can be  said to be making a
`bi- double Lie algebroid'. One can mean by that, for example, that
the triple vector bundle
\begin{equation}
    \begin{picture}(300,80)(0,65)
    \put(90,90){${\begin{CD} {TE}  @>>>   T M\\
                @VVV  @VVV \\
               E@>>>M\end{CD}}$}
    \put(120,110){$\begin{CD} T^*T^*E @>>>   \!\!T E^*\\
                @VVV  @VVV \\
               T^*E@>>>\!\!E^*\end{CD}$}
    {\put(132,84){\vector(-2,-1){18}}}
        {\put(132,126){\vector(-2,-1){18}}}
    {\put(192,84){\vector(-2,-1){18}}}
    {\put(192,126){\vector(-2,-1){18}}} 
   \end{picture}
\end{equation}
is a triple Lie algebroid, as one can check.
\end{example}

To define a `bi- multiple Lie algebroid'  one can also use the
supergeometry language for a shortcut. A  \textbf{bi- $n$-fold Lie
algebroid}   (or an \textbf{$n$-fold Lie bialgebroid}) can be
defined as an $n$-fold Lie algebroid $E$ such that all its duals are
also $n$-fold Lie algebroids satisfying the compatibility condition
that reads as follows: on the total space with completely reversed
parity $\Pi^nE$ there are $n$ commuting homological vector fields
$Q_r$ of weights $\w_r(Q_s)=\delta_{rs}$, which define  the
algebroid structures, and an odd or even (depending on the parity of
the number $n$) Poisson bracket of weight $(-1,\ldots,-1)$, and the
fields $Q_r$ are derivations of the bracket.

It should be possible to prove that this is equivalent to the
cotangent double being an $(n+1)$-fold Lie algebroid, as in
Example~\ref{ex.double}. A precise relation of this multidimensional
notion with Drinfeld's theory is yet to be clarified.

\bigskip
{

\small
\emph{Acknowledgement. }
I thank my friends Kirill Mackenzie, Yvette Kosmann-Schwarzbach and
Hovhannes Khudaverdian, for inspiring discussions, most valuable
criticism, and advice. K.~Mackenzie has pioneered the whole subject
of multiple and bi- structures in the groupoid and algebroid world.
I thank most cordially Y.~Kosmann-Schwarzbach for numerous comments
and remarks that helped to improve the original version
of the preprint~\cite{tv:mack} and were used in this text as well. My special
thanks go to the organizers of the annual international Workshops on
Geometric Methods in Physics in Bia{\l}owie\.{z}a, notably to
A.~Odzijewicz, for the highly inspiring atmosphere. Some of my early
notes on the subject of this paper were made in Warsaw after the
XXII Bia{\l}owie\.{z}a Workshop, and the main result was reported at
the XXV Workshop. The earlier version of the present paper became
known to experts as the preprint~\cite{tv:mack} and has already
influenced some works. It was also reported at a special program at
the Erwin Schroedinger Institute in Vienna in September 2007
(see~\cite{tv:qman-esi}). I thank the organizers of that program for
their hospitality and the wonderful working atmosphere. I also wish to thank the anonymous referees for the careful reading of the text and important remarks.

}


\def\cprime{$'$} \def\cprime{$'$} \def\cprime{$'$}

\end{document}